\newtheorem{problem}{Problem}
\theoremstyle{definition}
\newtheorem{definition}{Definition}[section]
\algnewcommand{\algorithmicand}{\textbf{and}}
\newcommand{\negl}{\mathrm{negl}}
\newcommand{\BS}[2]{#1.\mathrm{#2}}
\newcommand{\D}{\mathcal{D}}
\newcommand{\chain}{\mathrm{Chain}}
\newcommand{\History}{\mathrm{History}}
\newcommand{\LOG}{\mathrm{Ledger}}
\newcommand{\buffer}{\mathrm{Buffer}}
\newcommand{\cone}{\mathrm{Cone}}
\newcommand{\ProcessedWithTxCert}{\mathrm{ProcTxCertificate}}
\newcommand{\ProcessedTotalOrder}{\mathrm{ProcTotalOrder}}
\newcommand{\EqSet}{\mathrm{EqSet}}
\newif\ifcomment
\title{Slipstream: Ebb-and-Flow Consensus on a DAG with Fast Confirmation for UTXO Transactions}
\author[1]{Nikita Polyanskii}
\author[2]{Sebastian M{\"u}ller}
\author[3]{Mayank Raikwar}
\affil[1]{IOTA Foundation}
\affil[2]{Aix Marseille Universit{\'e}}
\affil[3]{University of Oslo}
\date{}
\newtheorem{theorem}{Theorem}
\newtheorem{remark}{Remark}
\newtheorem{lemma}[theorem]{Lemma}
\begin{document}

\maketitle

\begin{abstract}
This paper introduces Slipstream, a Byzantine Fault Tolerance (BFT) protocol where nodes concurrently propose blocks to be added to a Directed Acyclic Graph (DAG) and aim to agree on block ordering. Slipstream offers two types of block orderings: an optimistic ordering, which is live and secure in a sleepy model under up to 50\% Byzantine nodes, and a final ordering, which is a prefix of the optimistic ordering and ensures safety and liveness in an eventual lock-step synchronous model under up to 33\% Byzantine nodes. Additionally, Slipstream integrates a payment system that allows for fast UTXO transaction confirmation independently of block ordering. Transactions are confirmed in three rounds during synchrony, and unconfirmed double spends are resolved in a novel way using the DAG structure.
\end{abstract}

\section{Introduction}\label{sec:intro}
The problem of ordering blocks, or more generally events, in a Byzantine distributed system, has been a fundamental challenge in distributed computing for decades~\cite{lamport2019byzantine}. Many practical solutions assume a partially synchronous network, where the communication is asynchronous for an unknown period but eventually becomes synchronous after a \textit{global stabilization time} (GST). Such protocols tolerate Byzantine faults if at least two-thirds of the nodes are correct.

Recently, DAG-based consensus protocols have emerged as a promising approach to distribute the responsibility for consensus among multiple nodes, achieving balanced network utilization and high throughput. 
In such protocols, nodes collectively build a Directed Acyclic Graph (DAG), where each vertex represents a block of transactions referencing previous blocks. By interpreting their local DAGs, nodes can determine a final block ordering.

However, existing DAG-based BFT protocols in partially synchronous networks lack \emph{dynamic availability}~\cite{neu2021ebb,pass2017sleepy}. Dynamic availability ensures that a protocol remains live even when portions of the network go offline. BFT protocols designed for safety in asynchronous settings cannot progress when more than one-third of the nodes are asleep, as a quorum cannot be achieved. Such protocols are live only after a \textit{global awake time} (GAT), when at least a supermajority of nodes get awake. 

To achieve the best of both worlds, a family of protocols, referred to as \textit{ebb-and-flow} protocols, was introduced in~\cite{neu2021ebb}. Such protocols produce two types of block ordering: optimistic or available block ordering favoring liveness under dynamic participation (and synchrony) and final block ordering favoring safety even under network partitions. This enables clients to rely on different ledgers based on their assumptions and requirements.

This paper presents \textbf{Slipstream}\footnote{The name Slipstream reflects the design where the final block ordering operates in the ‘slipstream’ of the optimistic block ordering, ensuring that while the optimistic ordering is always live, the final ordering follows closely behind to provide safety even under asynchrony.}, a first ebb-and-flow DAG-based consensus protocol. In addition, Slipstream implements a payment system for a more efficient UTXO transaction confirmation.

\subsection{Our contribution and models}
We consider two specific models for our protocol: the \textit{slot-sleepy} (SS) model and the \textit{eventual lock-step synchronous} (ELSS) model. In both models, the network consists of  $n$ nodes of which up to $f$ are \textit{Byzantine}, and at least $n-f$ are \textit{correct}. The execution of the protocol proceeds in \textit{rounds}, grouped into non-overlapping \textit{slots}, where each slot consists of  $f+2$ rounds.  Each round is identified by a timestamp  $\langle s,i\rangle$, $s\in \mathbb{N}, i\in\{1,\ldots,f+2\}$, where $s$ is the slot index and $i$ is the round index within the slot.

In the ELSS model, communication is asynchronous until GST, after which it becomes lock-step synchronous, meaning that a message sent by a correct node is received by other correct nodes in the next round. All nodes are awake, and the system operates with $n=3f+1$ nodes.

In the SS model, communication is always lock-step synchronous, but nodes may be asleep or awake in each slot; more specifically, each node is either awake or asleep for the whole slot. The number of correct awake nodes is strictly larger than the number of Byzantine awake nodes. The adversary can control who is awake. In this model, GST is set to 0, while the global awake time (GAT) is unknown. In Table~\ref{tab: models}, we highlight the main features of each of the models.

To facilitate synchronization in the ELSS model after GST, we assume a tiebreaker mechanism: at the first round of each slot, all correct nodes select the same correct leader with positive probability. 
This can be achieved using a random common coin, or verifiable random functions (VRFs)~\cite{micali1999verifiable}. 

\begin{table}[t]
    \centering
    \begin{tabular}{|c||c|c|c|c|c|}
    \hline
    \textbf{Model}  & GST & GAT & \makecell{Fraction of correct \\ among awake} & \makecell{Assumption \\ for tiebreaker} & \makecell{Properties achieved \\
    by Slipstream}\\
       \hline
       ELSS  & $\in \mathbb{N}$ & 0 &  $>2/3$ & Yes & \textbf{P2}-\textbf{P6}\\
       \hline
        SS & 0 & $\in \mathbb{N}\cup\{\infty\} $ &  $>1/2$& No & \textbf{P1},\textbf{P3}\\
        \hline
    \end{tabular}
    \caption{Eventual lock-step synchronous (ELSS) and slot-sleepy (SS) models}
    \label{tab: models}
\end{table}

Our main contribution is the design of \textbf{Slipstream}, a DAG-based BFT protocol that simultaneously satisfies the following key properties
(the links to formal statements are given in parentheses): \\
\textbf{P1:} The protocol provides an \textit{optimistic block ordering}, which is safe and live in the SS model. Irrespective of GST, a node proposing a block gets this block in its optimistic ordering with latency $O(f)$ (Th.~\ref{th: dynamically available}). \\
\textbf{P2:} The protocol provides a \textit{final block ordering}, which is a prefix of the optimistic block ordering and is safe and live in the ELSS model. After GST and a leader-based merging of local DAGs\footnote{If GST$>0$, successful merging happens on average with the latency of $O(nf)$ rounds.}, every block of a correct node gets in final ordering with latency $O(f)$ (Th.~\ref{th: eventually syncrony}). \\
\textbf{P3:} After GST and a leader-based merging of local DAGs, the protocol operates in a \textit{leaderless} way. \\
\textbf{P4:} The protocol provides a notion of \textit{confirmation} for UTXO transactions. This confirmation is safe and live in the ELSS model (Th.~\ref{th: confirmation}). \\
\textbf{P5:} After $\max\{$GST,GAT$\}$ and a leader-based merging of local DAGs, transactions get confirmed through a \textit{fast (consensusless) path} in three rounds (Th.~\ref{th: confirmation}).\\
\textbf{P6:} After $\max\{$GST,GAT$\}$ and a leader-based synchronization of local DAGs, transactions get confirmed through a \textit{consensus path} with latency $O(f)$  (Th.~\ref{th: confirmation}). In particular, the consensus path serves as an \textit{unlocking mechanism} that resolves unconfirmed double spends by utilizing the DAG structure only.

The amortized communication complexity to commit one bit of transactions in Slipstream is $O(n^2)$ (see Rem.~\ref{rem: on communication complexity}).

\subsection{Comparison with related work}
\textbf{Ebb-and-flow protocol:} The first two properties \textbf{P1-P2} imply that Slipstream is of the ebb-and-flow type as introduced in~\cite{neu2021ebb} and achieve an optimal Byzantine fault tolerance resilience. All existing ebb-and-flow protocols have subprotocols for a sleepy model, which are leader-based~\cite{d2022goldfish,d2023simple,neu2021ebb}. Since randomization is needed to select the leader, the adversary always has a chance to guess the leader and make the leader go to sleep. This implies that these protocols achieve almost-surely termination, i.e., the liveness of optimistic ordering is probabilistic. Moreover, this issue was posed as an open question~\cite{momose2022constant} for all sleepy protocols (not necessarily ebb-and-flow) whether probabilistic termination is inherent or avoidable by leaderless protocols. We partially answer this question as the termination in Slipstream is achieved deterministically in the SS model and does not rely on randomization. However, we note that our SS model has stronger requirements than many other sleepy models: 1) the SS model has a lock-step nature, whereas $\Delta$-synchronous communication is considered in some sleepy models; 2) correct awake nodes are awake for the whole slot (consisting of $f+2$ rounds) in the SS model, whereas such stable participation is usually assumed for time $c \Delta$ in other sleepy models, e.g., $c=7$ in~\cite{momose2022constant}. Table~\ref{tab: ebb-and-flow protocol} compares Slipstream with two other ebb-and-flow protocols (we assume delay $\Delta=1$ in other protocols). The latency to settle a block in the optimistic block ordering with probability $1$ of Slipstream is better than in other protocols. The happy-case (e.g., when the selected leader is correct and awake) latency in an eventually synchronous model of the propose-vote-merge protocol~\cite{d2023simple} and the snap-and-chat protocol~\cite{neu2021ebb} (with a sleepy subprotocol from~\cite{ConstantLatencyinSleepyConsensus}) is $O(1)$ and could be as low as $4$ rounds. This is worse than $3$ rounds for UTXO transactions in Slipstream, but better than the latency of $O(f)$ rounds for other types of transactions. Since Slipstream is a DAG-based protocol, it settles the order over all blocks proposed in one slot, which could be $\Theta(fn)$, in one shot with latency $O(f)$. This is not the case for chain-based protocols, which settle, on average, at most one block in a slot.

\begin{table}[t]
    \centering
    \begin{tabular}{|c||c|c|c|c|c|c|}
    \hline
    \textbf{Protocol}  & \makecell{Latency in \\ sleepy model} & \makecell{Leaderless} &  \makecell{Extra latency \\ in ES model}  & \makecell{Fastest \\ latency} & \makecell{DAG/chain \\ -based}\\ 
       \hline
       Slipstream & $O(f)$ rounds & Yes & $+3$ rounds& $3$ rounds & DAG \\
        \hline
       Snap-and-chat~\cite{neu2021ebb} & \makecell{$\ell$ with prob.\\
       $1-1/\negl(\ell)$}  & No & \makecell{$+3$ phases of \\ Streamlet} & $>4$ rounds &chain \\
        \hline
      \makecell{Propose-vote-merge\\ \cite{d2023simple}} & \makecell{$\ell$ with prob.\\
       $1-1/\negl(\ell)$}  & No & $+3$ rounds & 4 rounds& chain \\       
        \hline
        
    \end{tabular}
    \caption{Comparison with ebb-and-flow protocols. We provide the latency in a sleepy model for a block to appear in an optimistic ordering and extra latency for a block from the optimistic ordering to appear in a final ordering. In addition, we provide the fastest latency in a happy case to commit any type of transaction. Function $\negl(\ell)$ grows faster than any polynomial.}
    \label{tab: ebb-and-flow protocol}
\end{table}
\textbf{Payment system:} The last three properties \textbf{P4-P6} are about the payment system integrated in Slipstream. The ordering of transactions can be derived from a block ordering, however, it is shown in~\cite{guerraoui2019consensus} that a system that enables participants to make simple payments from one account to another needs to solve a simpler task. In cases where payments are independent of one another (e.g., single-owned token assets or UTXO transactions), a total ordering becomes unnecessary, and a partial ordering suffices. This was later observed in multiple papers~\cite{baudet2020fastpay,baudet2023zef}, and different solutions that utilize DAGs were suggested. 
In Flash~\cite{lewis2023flash}, nodes create blocks to approve transactions included in prior blocks, and a quorum of approvals is sufficient to commit a UTXO transaction.
This solution achieves very low communication complexity and latency even in asynchronous settings, i.e., the latency can be as low as two network trips compared to three in \textbf{P5}. However, this and some other payment systems didn't consider a practical challenge in which unconfirmed double spends make funds forever locked. This challenge is addressed by our consensus path of confirmation in \textbf{P6}.

\textbf{DAG-based BFT protocol with integrated payment system:} 
Now we compare Slipstream with the most closely related works, Sui-Lutris~\cite{blackshear2023sui} and Mysticeti-FPC~\cite{babel2023mysticeti}, as they are both DAG-based BFT protocols with integrated payment systems. These protocols do not make progress of block ordering in a sleepy model (compared to \textbf{P1}). They operate in an eventual synchronous network and provide 1) a final block ordering which is live and safe under up to $33\%$ of Byzantine nodes (similar to \textbf{P2}, however, their latency after GST and GAT is $O(1)$ compared to $O(f)$ in Slipstream), and 2) a fast confirmation for owned-object transactions (similar to \textbf{P4}); specifically, confirmation is achieved in three rounds (similar to \textbf{P5}) and owned objects that get locked in one epoch by uncertified double spends are unlocked during epoch reconfiguration and available to clients for issuing transactions in the next epoch. The last property serves the same purpose as \textbf{P6}; however, we claim that our solution has several advantages. 

Sui-Lutris has a subprotocol that lets nodes confirm owned-object transactions \textit{before} including them in consensus blocks by constructing explicit transaction certificates. This solution requires extra signature generation and verification for each transaction. In both Mysticeti-FPC and Slipstream, confirmation of owned-object and UTXO transactions happens \textit{after} including them in consensus blocks by interpreting the local DAGs. In Mysticeti-FPC, nodes explicitly vote for causally past transactions in their blocks. A block is called a certificate for a transaction if its causal history contains a quorum of blocks voting for this transaction. Certificates provide safety for confirmation as no two conflicting transactions could get certificates. Once a quorum of certificates for a transaction appears in the local DAG, a node can confirm this transaction through the \textit{fast path}. This might happen before a block containing the transaction is committed. If a block, which is a certificate for a transaction, is committed by the Mysticeti consensus, then the transaction can be confirmed by the \textit{consensus path}. In case none of the double spends gets a quorum of approvals, the corresponding owned objects get locked and can not be spent until the epoch reconfiguration happens. 

To mitigate the lengthy waiting period for unlocking funds, we modify the voting rule and consensus path confirmation. In Slipstream, blocks can vote only for transactions included in the current or previous slot. A quorum of certificates enables fast-path confirmation. For the consensus path, we leverage two facts: 1) blocks vote for transactions only in the recent past, and 2) a quorum is required for both finalizing block ordering and constructing transaction certificates. Once a slot is double finalized, we apply the consensus path to transactions from two slots earlier, since no new certificates will be issued for them. First, we confirm transactions with certificates in the final block ordering, then apply a total order to confirm non-conflicting remaining transactions. We demonstrate how a similar approach can reduce the waiting period for unlocking funds and resolving double spends in Mysticeti-FPC from one epoch (24 hours) to just a few rounds (seconds), as detailed in Appendix~\ref{sec: adapting Slipstream to Mysticeti-FPC}.

\begin{table}[t]
    \centering
    \begin{tabular}{|c||c|c|c|}
    \hline
    \textbf{Protocol}  & \makecell{Latency \\  UTXO txs} & \makecell{Latency \\ other txs} &  \makecell{Unlocking \\ waiting time} \\ 
       \hline
       Slipstream & $3$ rounds & $O(f)$ rounds & $O(f)$ rounds \\
        \hline
       Mysticeti-FPC~\cite{babel2023mysticeti} & \makecell{$3$ rounds}  & $O(1)$ rounds & one epoch (24 hours)  \\
        \hline
      \makecell{Flash~\cite{lewis2023flash}+Cordial Miners~\cite{keidar2022cordial}} & \makecell{$2$ rounds}  &$O(1)$ rounds & $\infty$  \\       
        \hline
        
    \end{tabular}
    \caption{DAG-based BFT protocols with payment systems.}
    \label{tab: payment system}
\end{table}

\subsection{Overview of Slipstream}
During each round, awake nodes propose their blocks of transactions to be added to the DAG, referencing earlier blocks in their local DAGs.   The key component of Slipstream is based on a technique not found in existing DAG-based BFT protocols resilient to asynchrony:

\textbf{Slot-based digest of a DAG}: At the end of each slot, each node computes a \emph{slot digest} summarizing the ordering of \textit{all} the blocks in the local DAG that are created in previous slots (see Fig.~\ref{fig: commitments}). In addition, the slot digest commits to the digest of the slot before, thereby forming a \textit{backbone chain} of slot digests. Recall that in most existing DAG-based BFT protocols~\cite{babel2023mysticeti,danezis2022narwhal,gkagol2018aleph,malkhi2023bbca,shrestha2024sailfish,spiegelman2022bullshark}, digests or commitments are built using the causal history of leader blocks, where the leader nodes are determined by a predefined scheduler (or retrospectively using a random coin). The backbone chain is then formed using the chain of committed leader blocks. However, this approach does not work properly if one applies it directly to a sleepy model, as the adversary might make the leaders in the scheduler asleep, disabling awake nodes from creating digests and progressing the block ordering.

A slot digest and the corresponding backbone chain correspond to a unique optimistic block ordering. Each node adopts one digest for the whole slot (e.g., the node includes the digest in its blocks), updates the backbone chain at the end of the slot, and potentially can switch its backbone chain at the first round of the next slot. Nodes that were asleep at the previous slot join the protocol by adopting the digest of a majority of nodes from the previous slot. See Fig.~\ref{fig: overview} for an overview of the main steps.
\begin{figure}[t]
\begin{center}
\begin{tikzpicture}[scale=0.8, transform shape]
    \draw[thick] (-4,-1) -- (4,-1);
    \draw[thick, dashed ] (4,-1) -- (12,-1);
    
    \draw[thick] (12,-1) -- (16,-1);

    \node at (-2, 1) {\textbf{Propose}};
    \node[below] at (-2, 0.7) {\small Node proposes} ;
    \node[below] at (-2, 0.3) {\small its  digest for};
    \node[below] at (-2, -0.1) {\small slot $s-1$};
    \draw[dashed] (-2,-0.6) -- (-2,-1.5);

    \node at (2, 1) { \textbf{Merge and Vote}};
    \node[below] at (2, 0.7) {\small Node might merge};
    \node[below] at (2, 0.3) {\small local  DAGs and};
    \node[below] at (2, -0.1) {\small votes for one digest};
    \draw[dashed] (2, -.6) -- (2,-1.5);

    \node at (6, 1) { \textbf{Certify}};
    \node[below] at (6, 0.7) {\small Node might};
    \node[below] at (6, 0.3) {\small certify one digest};
      \node[below] at (6, -0.1) {\small and might finalize it};

    \node at (10, 1) { \textbf{Synchronization}};
    \node[below] at (10, 0.7) {\small Awake nodes with same digest} ;
    \node[below] at (10, 0.3) {\small synchronize prefixes of};
    \node[below] at (10, -0.1) {\small their DAGs up to slot $s$};
    \node[below] at (10, -0.2) { };

    \node at (14, 1) { \textbf{Propose}};
    \node[below] at (14, 0.7) {\small Node proposes} ;
    \node[below] at (14, 0.3) {\small its  digest for };
    \node[below] at (14, -0.1) {\small slot $s$};
    \node[below] at (14, -0.2) { };
    \draw[dashed] (14,-0.6) -- (14,-1.5);

    \node[below] at (-2, -1.5) {$\langle s, f+2 \rangle$};
    \node[below] at (2, -1.5) {$\langle s+1, 1 \rangle$};
    \node[below] at (8, -1.5) {$\ldots\ldots\ldots\ldots$};
    \node[below] at (14, -1.5) {$\langle s+1, f+2 \rangle$};
\end{tikzpicture}
\end{center}
\caption{Different phases of Slipstream within a slot}
\label{fig: overview}
\end{figure}

By induction, the protocol can be shown to solve the consensus problem in a sleepy model if the following property holds: if two correct nodes adopt the same digest when entering a slot, then they will generate the same digests at the end of the slot. To this end, we apply the idea of the Dolev-Strong Byzantine Agreement (BA)~\cite{dolev1983authenticated} to the DAG setting. The DAG structure allows to amortize the signature computations and verifications required by the Dolev-Strong BA, and, surprisingly, this use case was not described in the literature before:

\textbf{Synchronization:} Before adding a new block $B$ from the current slot in a local DAG, the node first ensures that $B$ contains the same digest as the node adopts. Then the node checks all \textit{unvalidated} blocks in the causal history of $B$, that are not yet in the local DAG and created at the previous slots.  Specifically, for each unvalidated block, one checks all the blocks from the current slot, from which one can observe (or reach) that unvalidated block (by traversing the DAG). If for each unvalidated block, the number of observers is at least the round offset index within the slot (corresponding to the time when the check is performed), the block $B$ with its history can be safely added to the local DAG.

While Slipstream under synchrony already outputs the optimistic block ordering,  one needs to have a DAG synchronization mechanism to capture the ELSS model for the time when asynchrony ends. One natural solution is to adopt the optimistic block ordering from a randomly chosen ``leader'' and merge the DAG from the leader with its own. However, this brings another challenge as following the leader might break safety for the optimistic block ordering. Before digging into that rule we explain our mechanism for final block ordering, which has similarities with PBFT-style consensus protocols and has even more parallels with Cordial Miners~\cite{keidar2022cordial} and Mysticeti~\cite{babel2023mysticeti}, two other DAG-based BFT protocols working with uncertified DAGs. Nodes in Slipstream examine their local DAGs to find blocks serving as implicit certificates for slot digests, whereas nodes in Cordial Miners and Mysticeti find blocks which are certificates for leader blocks. 

\textbf{Finality and digest certificates:} A block is said to be a \textit{certificate} for a slot digest if one can reach from this block a quorum of blocks in the same slot which include the same slot digest. Once the local DAG contains a quorum of certificates, the slot digest and the corresponding block ordering are finalized. 

The idea of introducing certificates is to provide safety of the final block ordering. Specifically, if one correct node finalizes a slot digest, then a majority of correct nodes has created certificates for the same slot digest. Nodes with certificates get locked and cannot switch their backbone chains to a one diverging before the certified digest unless extra conditions holds. Next we explain how DAG merging is performed and which conditions allow a locked node to switch its backbone chain.

\textbf{Leader-based merging and switching backbone chains:} We assume that after GST, at the beginning of each slot, any correct node can be selected as a leader in the perception of all correct nodes with a positive probability. If an awake node enters a new slot, it might try to switch its backbone chain to the one of the leader.  On a high level, a node should \textit{not} switch its backbone chain when one of the following conditions holds: 
\begin{itemize}
    \item the node has generated the same digest as a majority of nodes from the last slot. This condition allows correct nodes to stay on the same backbone chain in the SS model and not be disrupted by the adversary.
    \item the node has finalized a digest at the previous slot. This condition allows correct nodes staying on the same backbone chain after being synced in the ELSS model and not being disrupted by the adversary.
    \item the latest digest certificate that the leader holds, was created before the node's certificate. This ensures the safety of final block ordering.
\end{itemize}
In other cases (with several extra conditions ensuring the liveness of the protocol), the node has to switch its backbone chain to the one of the leader and merge two DAGs: its own and the one of the leader. With a positive probability, all correct nodes can be shown to adopt the same backbone chain, merge their DAGs, and continue progressing the final block ordering.

\subsection{Outline}
The remainder of the paper is organized as follows. Sec.~\ref{sec: system model} introduces formal network and communication models discussed in the paper and formulate problem statements for the respected models. Sec.~\ref{sec: preliminaries} discusses several notations and definitions needed to describe the protocol. We present our protocol Slipstream in Sec.~\ref{sec: protocol}. Sec.~\ref{sec: results} describes formal results achieved by Slipstream, which are then proved in Sec.~\ref{sec: proofs}. Appendix~\ref{sec: valid DAG} formally introduces the concept of a valid DAG in Slipstream. Most formal algorithms are deferred to Appendix~\ref{sec: algorithms}. Appendix~\ref{sec: comparison with dag-based bft protocls} gives a comparison of Slipstream with DAG-based BFT protocols. Appendix~\ref{sec: adapting Slipstream to Mysticeti-FPC} explains how to adapt the Slipstream consensus path for UTXO transactions to Mysticeti-FPC.

\section{System model and problem statements}\label{sec: system model}

The network consists of \( n \) nodes, where up to \( f \) nodes are Byzantine, and the remaining correct (honest) nodes follow the protocol. We consider two configurations for the network: \( n=3f+1 \) or \( n=2f+1 \), depending on the model used. Each node is equipped with a unique public-private key pair, where the public key serves as the node’s identifier. Nodes use their private key to sign blocks and the signatures are verifiable by other nodes using the corresponding public key. 

Time is divided into discrete \textit{slots}, each consisting of \( f+2 \) \textit{rounds}, indexed by a timestamp \(\langle s,i \rangle\), where \( s \) represents the slot number, and \( i \) the round within the slot.
The execution proceeds in three phases during each round: \textsc{Receive}, \textsc{State Update}, and \textsc{Send}. 

\subsection{Network and communication models}\label{sec: models}
We focus on two models in the paper: an \textit{eventual lock-step model} and a \textit{slot-sleepy model}. Before explaining them, we recall a standard lock-step synchronous model.
\begin{definition} [Lock-step synchronous model] A consensus protocol works under a lock-step synchronous model if and only if an execution consists of a sequence of rounds, and an round includes a sequence of the three phases:
\begin{itemize}
    \item A \textsc{Receive phase}, in which nodes receive messages sent by other nodes. In the current round, a node receives all messages from correct nodes that are sent in the previous round.
    \item A \textsc{State update phase}, in which a node takes an action based on the received messages from the receive phase.
    \item A \textsc{Send phase}, in which nodes send messages to other nodes.
\end{itemize}    
\end{definition}
\begin{remark}
Compared to the original definition of a lock-step synchronous model \cite{lock-step94}, we changed the order of the phases by moving the send phase to the end. This is done intentionally in order to simplify the description of a slot-sleepy model (see Def.~\ref{def: slot-sleepy model}). In particular, for the described sequence of phases, nodes can skip the receive and state update phases at the very first round in the execution of the protocol.
\end{remark}
In an eventual lock-step synchronous model, the network behaves as a lock-step synchronous model after an (unknown) finite asynchronous period ends.
\begin{definition}[Eventual lock-step synchronous model]\label{def:ELSS} A consensus protocol works under an eventual lock-step synchronous (\textup{ELSS}) model if there is a Global Stabilization Time, unknown to nodes and denoted as round $GST$, such that after $GST$ the protocol follows a lock-step synchronous model. In addition, 
\begin{itemize}
    \item Before round $GST$, nodes also have three phases for each round, but messages from the correct nodes can be delayed arbitrarily, i.e., in the receive phase of a round, messages from any prior round are not necessarily received.
    \item In the received phase of round $GST$, correct nodes receive all messages from the correct nodes that were sent at or before $GST$.
\end{itemize}
\end{definition}
In addition, we assume that after $GST$ with a positive probability, any correct node can be selected as a leader by all correct nodes when calling a function $\Leader()$.
\begin{definition}[Leader selection] \label{def: leader function}
A leader selection mechanism is a function $\Leader()$ that, given a slot number $s$, selects a leader for that slot with the following requirement. 
Denote the output of $\Leader()$ for node $i$ at slot $s$ as a random variable $\mathcal{L}_i^s$, then for $s \geq GST$ and for any honest node $j$, 
\[
\mathbb{P}(\mathcal{L}_i^s = j | \text{$i$ is honest}) \geq 1/n.
\]
\end{definition}

Two common ways to implement such a leader selection mechanism are:
\begin{itemize}
    \item Random Common Coin: The oracle takes the slot number $s$ as input and sends a random value $r(s)$ to all nodes during the send phase of round $\langle s-1, f+2\rangle$. This value is uniformly distributed and unpredictable, ensuring no node knows the leader before the receive phase of round $\langle s, 1\rangle$. For ease of presentation and proofs, we will assume in the proof the properties of a leader selection mechanism stemming from this mechanism.
    
    \item Verifiable Random Function (VRF): Each node computes a verifiable random output based on the slot number $s$ and its secret key. It shares this output during the send phase of $\langle s-1, f+2\rangle$. The node with the lowest VRF output is selected as the leader. Let us note that this method does not provide a random number that is unpredictable in the strict sense, as a malicious node with a low VRF output can predict an increased likelihood of being selected as leader. However, this does not reveal any information on which honest node will have the lowest VRF among the honest nodes. We refer to \cite{ConstantLatencyinSleepyConsensus} for more details and previous use of VRFs. 
\end{itemize}

In a slot-sleepy model, each node goes through sleep-wake cycles each of which corresponds to a sequence of slots.
\begin{definition}[Slot-sleepy model]\label{def: slot-sleepy model}
    A consensus protocol works under a slot-sleepy (\textup{SS}) model if, for each slot $s\in \mathbb{N}$, each node is either slot-$s$ awake or slot-$s$ asleep. If a node is slot-$s$ awake, then the node has all three (receive, state update, and send) phases at every round of slot $s$.
 If a node is slot-$s$ asleep, then the node does not have any phase at any round of slot $s$.
\end{definition}

Recall that the difference between these two models is highlighted in Table~\ref{tab: models}. 

\subsection{Problem statements}\label{sec: problem statements}
\subsubsection{Two problem statements for the \textup{ELSS} model} The first problem for this model is a classical consensus problem. We consider a notion of \textit{finality} for blocks,  which allows nodes to linearly sequence final blocks.  Each node maintains its final order $\Order_{\text{final}}$ of blocks.

\begin{problem}\label{problem: finality-final_blocks}
Design a DAG-based consensus protocol that satisfies two requirements.
\textbf{Safety:}  For the sequences of final blocks $\Order_{\textup{final}}$ by any two correct nodes, one must be the prefix of another. \textbf{Liveness:} If a correct node creates a block, then the block becomes eventually final for any correct node, i.e., it gets included to $\Order_{\textup{final}}$.
\end{problem}
The second problem is related to payment systems.  We assume that there exist accounts in the network. These accounts can issue \text{UTXO} transactions and broadcast them to nodes.   When creating blocks, nodes include a subset of transactions in their blocks. 

A \text{UTXO} is a pair consisting of a positive value and an account that can consume this \text{UTXO}. A single-owned \text{UTXO} transaction is a tuple containing (i) \text{UTXO} inputs, which can be consumed by one account, (ii) the account signature, and (iii) \text{UTXO} outputs. The sum of values in the inputs and the outputs are the same.
Two different \text{UTXO} transactions are called a \textit{double-spend} if they attempt to consume the same \text{UTXO}. Note that two identical transactions are not treated as a double-spend and generate different output. Each account begins with an initial \text{UTXO}, which specifies the initial balance of the account. 

The order over final blocks naturally allows to find the order over \text{UTXO} transactions in these blocks and check which of them can be executed, i.e., being confirmed. However, \text{UTXO} transactions don't require a total order, and \textit{confirmation} can be achieved through a consensusless path. We distinguish two paths how confirmation can be achieved: \textit{fast-path} confirmation and \textit{consensus-path} confirmation. In either case, if any correct node regards a \text{UTXO} transaction
as confirmed, then all correct nodes eventually regard it as confirmed. 
Once a \text{UTXO} transaction is confirmed, all its output \text{UTXO}s are said to get confirmed. 

We say that a transaction is \textit{cautious} and \textit{honest} if created by honest account, i.e., never creates a double-spend, and the inputs of the transaction are already confirmed in the perception of a correct node to which the transaction is sent.
\begin{definition}[Cautious account]\label{def: cautious account} A cautious account is an account that creates a \text{UTXO} transaction only if all the inputs to transactions are already confirmed by one correct node. Transactions issued by cautious accounts are called cautious transactions and cautious transactions are supposed to be sent to at least one such correct node. 
\end{definition}
\begin{remark}
In a Byzantine environment, it is in general not known which nodes are correct and which are not. However, having at least $f+1$ nodes confirming a transaction is sufficient for knowing that at least one correct node is confirming the transaction. Then the transaction can be sent to these $f+1$ nodes.  
\end{remark}
\begin{remark}
In the protocol itself, an account is not necessarily running a node. However, for simplicity, we can assume either an account has its own node, or a node is able to produce a proof of the transaction confirmation that can be verified by the account.
\end{remark}
\begin{definition}[Honest account]\label{def: honest account} An honest account is an account that uses any given \text{UTXO} associated with this account only once to create a \text{UTXO} transaction.  Transactions issued by honest accounts are called honest transactions.   
\end{definition}


Since we concentrate only on \text{UTXO} transactions in this paper, we consider a \text{UTXO} ledger, written as $\LOG$. This ledger is not a linearly ordered log of transactions, but instead, it is a set of confirmed UTXO transactions that must satisfy the two properties:\\
\textbf{1.} No double-spend appears in $\LOG$;\\
\textbf{2.} For any given UTXO transaction in $\LOG$, the transactions creating the inputs of the given transaction must be in $\LOG$.\footnote{We assume that the initial UTXOs of accounts are generated by the genesis transaction, which is always a part of the ledger $\LOG$.}
\begin{problem}
\label{problem: confirmation and finality}
Integrate a payment system in a DAG-based consensus protocol that satisfies the three requirements.
\textbf{Safety:} For any two \text{UTXO} transactions that constitute a double-spend, at most one of them could be confirmed, i.e., included in $\LOG$. \textbf{Liveness:} Every honest cautious \text{UTXO} transaction is eventually consensus-path confirmed, i.e., gets included in $\LOG$. There is a round $i$ after GST such that every honest cautious \text{UTXO} transaction created after $i$ is fast-path confirmed two rounds after including in a block by a correct node. \textbf{Consistency:} If a \text{UTXO} transaction is fast-path confirmed by one correct node, then it gets consensus-path confirmed by every correct node.
\end{problem}

\subsubsection{One problem statement for the \textup{SS} model}
We consider slot digests and a notion of \textit{commitment}, which allow nodes to linear sequence blocks committed to a slot digest. Each node maintains its optimistic order $\Order_{\text{own}}$ of blocks. 
\begin{problem}
\label{problem: finality-committed_blocks}
Design a DAG-based consensus protocol that satisfies two requirements.
\textbf{Safety:}  For the sequences of committed blocks $\Order_{\textup{own}}$ by any two correct nodes, one must be the prefix of another. \textbf{Liveness:} If a slot-$s$ awake correct node creates a block, then the block becomes committed by any slot-$t$ awake correct node with $t\ge s+2$.
\end{problem}

\section{Protocol preliminaries}\label{sec: preliminaries}
The basic unit of data in the protocol is called a \textit{block}, and nodes send only blocks as their messages.
    Each block $B$ contains: 
 \begin{itemize}
     \item \( \BS{B}{refs}\)  -- a list of hash references to previous blocks;
     \item \( \BS{B}{digest} \) -- a slot digest (see Def.~\ref{def: commitment});
     \item \( \BS{B}{txs} \) -- a set of transactions; 
    \item \( \BS{B}{time} \) -- a timestamp (for simplicity, we refer to the slot and round indices of the timestamp as $\BS{B}{slot}$ and $\BS{B}{round}$);
     \item \( \BS{B}{node} \) -- a node's unique identifier;
     \item $\BS{B}{sign}$ -- a node's signature that verifies the content of the block.
 \end{itemize} 

\begin{definition}[Reachable block]\label{def: reachable block and tx}
Block $B$ is called \textit{reachable} from block $C$ in a DAG if one can reach $B$ starting at $C$ and traversing the DAG along the directed edges. 
\end{definition}
In particular, $B$ 
is reachable from $B$. A special case of a DAG that we consider in the paper is defined based on the causal history of a block; see Fig.~\ref{fig: pastCone and Equivocation}.
\begin{definition}[Past cone]
    The past cone of block $B$, written as $\cone(B)$, is the DAG whose vertex set is the set of all blocks reachable from $B$. 
\end{definition}
\begin{figure}
\begin{subfigure}{.5\textwidth}
    \centering
    \includegraphics[width=.95\linewidth]{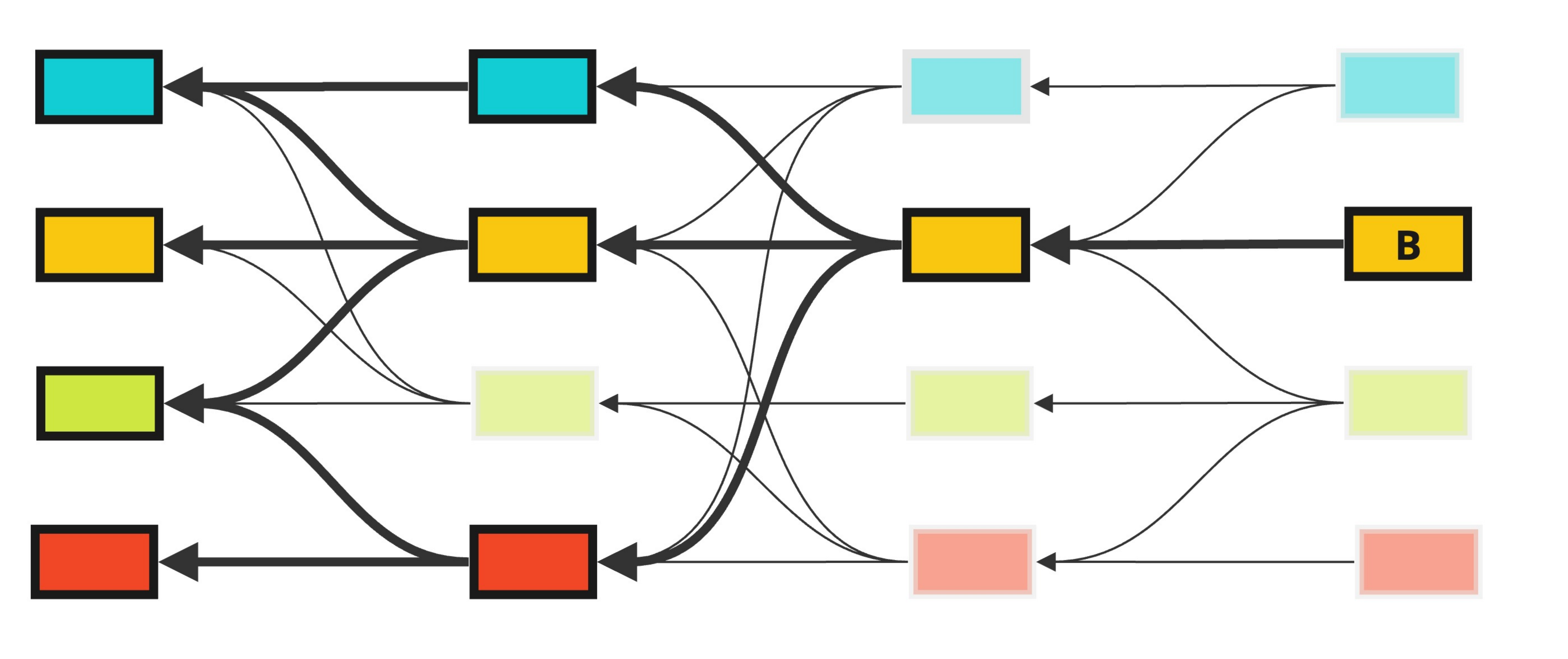}
\end{subfigure}%
\begin{subfigure}{.5\textwidth}
    \centering
    \includegraphics[trim={0 2cm 0 0},clip,width=.95\linewidth]{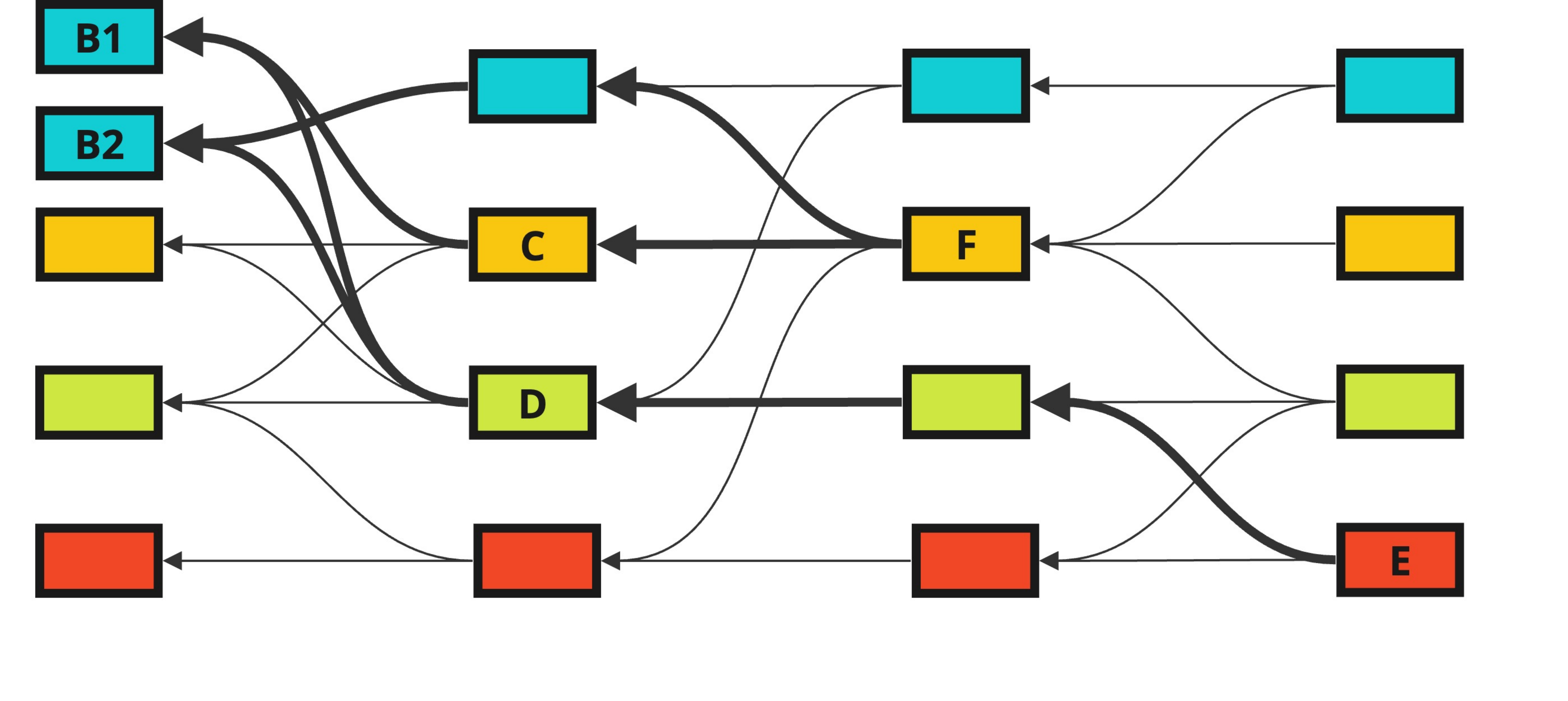}
\end{subfigure}
\caption{On the left, the past cone of a block $B$ consist of all blocks that are reachable from $B$. Blocks depicted with less opacity are not in the past cone, while blocks that appear `fatter' do form the past cone. 
On the right, $B1$ and $B2$ are equivocations of the blue node. The orange node issues a block $C$ referring only to $B1$ and not to $B2$. The green sees the equivocation immediately and refers with its block $D$ to both blocks $B1$ and $B2$ and adds the blue node to its set of equivocators. The orange node adds the blue node to the set of equivocators in the next instant with block $F$, and the red one finally witnesses the equivocation with its block $E$.
}
\label{fig: pastCone and Equivocation}
\end{figure}

Each node maintains locally its own DAG $\D$ and the set of all received blocks, denoted as $\buffer$.
In the protocol described later, a correct node issues a block every round and always references its previous block. Thus, all blocks of a correct node are linearly ordered in $\D$. 
\begin{definition}[Equivocation]
A node $\mu$ is an equivocator for node $\eta$ if two blocks created by $\mu$ are not linearly ordered in $\buffer$  maintained by $\eta$. 
\end{definition}
We assume that if a correct node detects an equivocation, then this node includes a proof 
in the next block. After receiving this block, every correct node checks the proof, identifies the equivocator, and updates its known sets of equivocators, denoted as~$\EqSet$; see Fig.~\ref{fig: pastCone and Equivocation}.


\begin{definition}[Quorum]
A set of blocks $S$ is called a quorum if $|\{\BS{B}{node}: B\in S\}|\ge 2f+1$.
\end{definition}

\subsection{Fast-path confirmation for UTXO transactions}
The next definition is motivated by cautious transactions, discussed in Sec.~\ref{sec: problem statements} and more formally in Def.~\ref{def: cautious account}.
\begin{definition}[Ready transaction]\label{def: valid transaction}
    For a given block $B$, a \text{UTXO} transaction $tx$ is called $B$-ready if the two conditions hold:  \textbf{\textup{1)}} the transaction $tx$ is included in $\BS{B}{txs}$, and  \textbf{\textup{2)}}  all transactions, whose outputs are the inputs of transaction $tx$, are confirmed in $\cone(B)$.
\end{definition}
\begin{definition}[Transaction approval by block]\label{def: transaction approval}
A block $C$ approves a \text{UTXO} transaction $tx$ in block $B$ if the three conditions hold: \textbf{\textup{1)}} transaction $tx$ is $B$-ready,  \textbf{\textup{2)}} block $B$ is reachable from $C$, i.e., $B\in \cone(C)$,  \textbf{\textup{3)}} no block, that contains a double-spend for $tx$, is reachable from $C$.
\end{definition}
Two instances of the same transaction $tx$, that is included in different blocks $B$ and $C$, i.e., $tx\in \BS{B}{txs}$ and $tx\in \BS{C}{txs}$, are not considered as a double-spend. However, the number of transaction approvals is computed independently for $tx$ in $B$ and $C$.
\begin{definition}[Transaction certificate]\label{def: transaction certificate}
A block $C$ is called a transaction certificate (\textup{TC}) for a \text{UTXO} transaction $tx$ in block $B$ if $\BS{C}{slot}\in \{\BS{B}{slot}, \BS{B}{slot}+1\}$ and $\cone(C)$ contains a quorum $S$ of blocks, each of which approves $tx$ in $B$.
\end{definition}

\begin{definition}[Fast-path confirmation]\label{def: confirmedTx}
A transaction $tx$ is said to be fast-path confirmed in a DAG if the DAG contains a block $B$ and a quorum of blocks, each of which is a certificate for $tx$ in $B$.
\end{definition}
We refer to Fig.~\ref{fig: txcertificate} for the concept of transaction certificates. By Def.~\ref{def: cautious account}, \ref{def: honest account}, and~\ref{def: transaction approval}, if the causal history of a block $C$ contains a quorum of blocks all observing an honest cautious transaction in a block $B$, then $C$ is a transaction certificate. Moreover, $tx$ is fast-path confirmed due to a quorum of TCs for $tx$ in $B$.

\begin{figure}[t!]
    \centering
\includegraphics[width=0.45\textwidth]{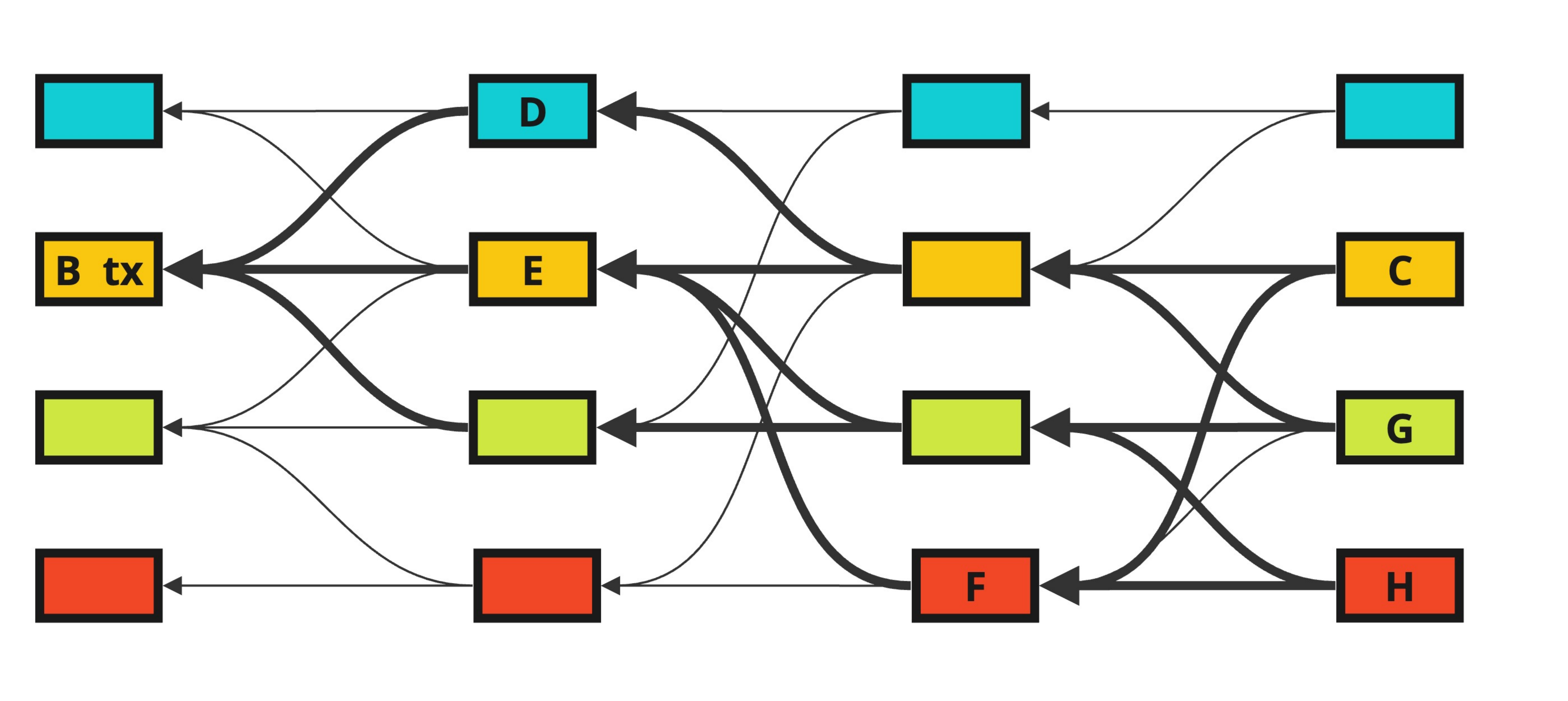}
    \caption{The block $B$ contains an honest cautious transaction $tx$, i.e., $tx \in \BS{B}{txs}$, and thus $tx$ is $B$-ready. The block $C$ issued by the orange node is a transaction certificate for $tx$ in $B$, as its past cone contains a quorum represented by blocks $D$, $E$, and $F$, approving block $tx$ in $B$. In the same way, blocks $G$ and $H$ form transaction certificates; hence, transaction $tx$ is fast-path confirmed. }
    \label{fig: txcertificate}
\end{figure}

\subsection{Slot digest}
We proceed with a definition of a slot digest that allows us to order blocks created before a certain slot optimistically. For a given slot digest $\sigma$, we write $\sigma.\mathrm{slot}$ to refer to the slot for which the digest is constructed. 

\begin{definition}[Slot digest]\label{def: commitment}
A digest $\sigma_s$ for a slot $s$ is defined based on a valid DAG at time round $\langle s+1,f+2\rangle$ in an iterative way: \begin{itemize}
    \item for $s\ge 1$, 
    $$
    \sigma_s=\Hash(\sigma_{s-1},\Concat(Blocks_{\le s})),
    $$
    where $Blocks_{\le s}$ denotes all blocks in the DAG that are created at or before slot $s$ and not committed by $\sigma_{s-1}$. Here, $\sigma_{s-1}$ is the digest included in all blocks in the DAG, which are created at slot $s+1$ and round $\le f+1$. A function $\textsc{Concat}(\cdot)$ orders blocks in a certain deterministic way and concatenates them. We say the digest $\sigma_s$ commits all the blocks committed by $\sigma_{s-1}$ and the blocks in set $Blocks_{\le s}$;
    \item $\sigma_{-1}=0$ and $\sigma_{0}=\Hash(0, Genesis)$.
\end{itemize} 
Furthermore, we write $\BeforeCommit(\sigma_s)$ to denote $\sigma_{s-1}$ and write $Blocks_{=s}$ to denote all blocks created at slot $s$. 
\end{definition}
We refer to Fig.~\ref{fig: commitments} to demonstrate how digests are included in blocks; in particular, for the same slot, nodes could generate different digests. See Appendix~\ref{sec: digest correct} for details on when slot digests are included in blocks in a valid DAG.
\begin{definition}[DAG associated with digest]\label{def: DAG associated}
For a given slot digest $\sigma$, we write $\D(\sigma)$ to denote the DAG whose vertex set is all blocks committed to $\sigma$. We write $\EqSet(\sigma)$ for the set of equivocators that can be found based
on $\D(\sigma)$.    
\end{definition}
By definition, a digest for slot $s$ corresponds to a unique sequence of committed blocks created at or before slot $s$. One can invoke $\Order(\sigma_s)$ to get the order of blocks in $\D(\sigma_s)$. Specifically, the order is defined iteratively:  
\begin{equation}\label{eq: ordering}
{\Order(\sigma_s)= \Order(\sigma_{s-1}) \,||\, \Concat(Blocks_{\le s})},
\end{equation}
where $Blocks_{\le s}$ are the blocks in $\D(\sigma_s)\setminus \D(\sigma_{s-1})$ with $\sigma_{s-1}=\BeforeCommit(\sigma_s)$, and the base order is $\Order(\sigma_0)=(Genesis)$. Each node maintains its optimistic order of blocks written as $\Order_{\text{own}}$. This order is defined as $\Order(\sigma)$, where $\sigma$ is the digest adopted by the node.

Slot digests form a slot digest tree, where the path between the root and any vertex in the tree determines a backbone chain.  If two correct nodes agree on digest $\sigma$, they share the same subDAG $\D(\sigma)$ in their respective DAGs. 
\begin{definition}[Backbone chain]
    For a given digest $\sigma_s$, define the backbone chain associated with $\sigma_s$ as 
    $$
    \chain(\sigma_s)=(\sigma_{-1},\sigma_{0},\ldots,\sigma_{s-1}, \sigma_s),
    $$
    where $\sigma_{j-1} = \BeforeCommit(\sigma_{j})$ for $j\in [s]$. When the backbone chain $\chain$ is clear from the context, we also write $\chain[j]$ to denote $\sigma_j$.
\end{definition}
\begin{definition}[Consistent and conflicting digest]\label{def: consistent and consistent}
A digest $\sigma$ is called consistent with a digest $\theta$ if the backbone chain $\chain(\sigma)$ contains $\theta$. Two digests $\sigma$ and $\theta$ are called conflicting if  neither $\sigma$ is consistent with $\theta$, nor $\theta$ is consistent with $\sigma$. To check whether two digests are conflicting, nodes use a function $\IsConflict(\theta, \sigma)$ that outputs $1$ in case they are conflicting and $0$ otherwise.
\end{definition}



\subsection{Finality for slot digests}
Similar to transaction certificates, we introduce the notion of digest certificates that allow nodes to finalize backbone chains and secure finality for a chain switching rule (see Sec.~\ref{sec: protocol}.\ref{sec: chain-switching rule}).
\begin{definition}[Digest certificate] \label{def: quorum certificate}
A digest certificate (\textup{DC}) for a digest $\sigma$ is a block $A$ with $\BS{A}{slot}=\BS{\sigma}{slot}+2$ such that $\cone(A)$ contains a quorum of blocks, all issued at the same slot $\BS{A}{slot}$ and include the same digest $\sigma$. Note that either $\BS{A}{digest}=\sigma$, or $\BeforeCommit(\BS{A}{digest})=\sigma$. If $A$ is a \textup{DC}, nodes can invoke a function $\Commit(A)$ to get the digest $\sigma$, certified by $A$.
\end{definition}

\begin{figure}[t]
    \centering
\includegraphics[width=\textwidth]{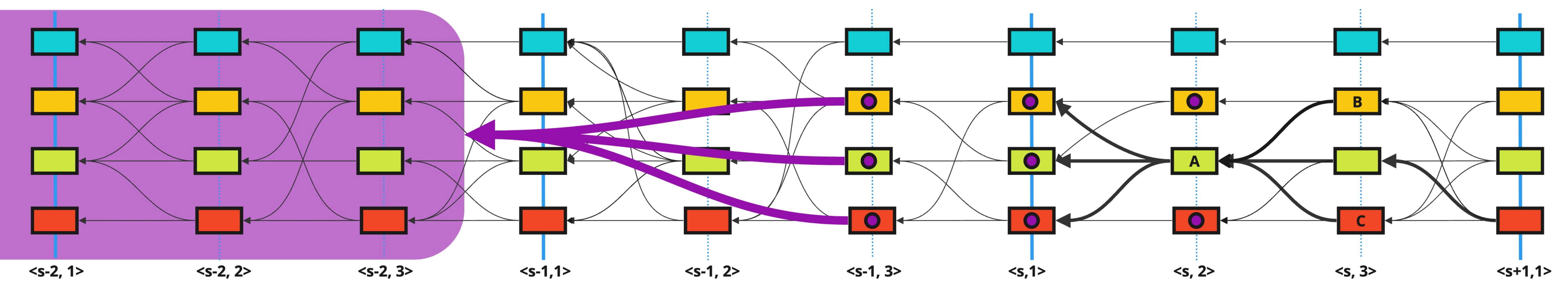}
    \caption{The blocks of the orange, green, and red nodes created at round $\langle s-1, 3\rangle$ contain a digest of the DAG up to slot $s-2$. Here, the fat purple edges are not normal references but represent the $\BS{B}{digest}$ entries, and this digest is depicted using a purple circle. The blue node has a different perception of what happened in slot $s-2$ and will create a different digest not shown in the image. Block $A$ does contain blocks of a quorum with the same digest in its past cone, and hence serves as a digest certificate (DC) of the ``purple'' digest. This digest becomes final as blocks $B$ and $C$ also form DCs. }
    \label{fig: commitments}
\end{figure}

\begin{definition}[Final digest]\label{def: final commitment}
A digest $\sigma$ is called final in a valid DAG if the DAG contains a quorum of \textup{DCs} for $\sigma$. In addition, if $\sigma$ is final, then all digests in the backbone chain ending at $\sigma$, i.e., $\chain(\sigma)$, are also final. 
\end{definition}
In Fig.~\ref{fig: commitments}, blocks $A$, $B$ and $C$ are certificates for the same digest and the digest gets final due to a quorum of certificates. 

Each node maintains a view of the final block order, denoted as $\Order_{\text{final}}$, defined by $\Order(\sigma)$, where $\sigma$ is the latest final digest.
While a digest that is final for one correct node will eventually become final for all other correct nodes, the finalization of the same digest might occur at different slots for different nodes. To have a more robust perception of the finality time of digests, we introduce the following definition.
\begin{definition}[Finality time]\label{def: final time}
Let $\D$ be a valid DAG. Let $s_{\textup{final}}$ be the slot index of the last final digest $\sigma_{s_{\textup{final}}}$ in $\D$. Let $s_{\textup{pre}}$ be the slot index of the last final digest in $\D(\sigma_{s_{\textup{final}}})$. Then for any digest $\sigma_s$ with $1\le s\le s_{\textup{pre}}$,  one can define the finality time $\FinalTime(\sigma_s)$ as the first slot index $\tau$ such that $\sigma_s$ is final in $\D(\sigma_{\tau})$\footnote{Note that the finality time of some digests may coincide.}. For $s>s_{\textup{pre}}$, we set $\FinalTime(\sigma_s)$ to be $\bot$.
\end{definition}
From the safety property of digest finality (see Appendix~\ref{sec: proof for finality for ELSS}), it follows that if $\FinalTime(\sigma_s)$ is defined not as $\bot$, then their values are the same for two correct nodes. While the notion of a finality time is defined for digests, one can define $\FinalTime(s)$ for a slot $s$ as $\FinalTime(\sigma)$ with $\BS{\sigma}{slot}=s$.

\subsection{Consensus-path confirmation for UTXO transactions}\label{sec: transaction finality}
Given a valid DAG $\D$ with a backbone chain $\chain$, the notion of finality time 
allows to partition the DAG. Specifically, let $(\tau_1,\dots,\tau_{i+1})$ be the vector of all distinct finality times for digests in $\chain$ (here, $\tau_1$ is assumed to be $0$); then one can partition the DAG $\D$ using  the corresponding digests: 
$\D(\sigma_{\tau_{j+1}})\setminus D(\sigma_{\tau_j})$ for $2\le j \le i$. Informally, the consensus-path confirmation is defined using this partition. First, all transactions with transaction certificates in the partition are attempted to be added to $\LOG$. Second, by traversing the blocks in the partition in the final order $\Order_{\text{final}}$, all transactions that don't have certificates are attempted to be added to $\LOG$. The attempt is successful if no conflicting transaction is already in $\LOG$ and the transactions creating transaction inputs are already in $\LOG$, i.e., when the UTXO ledger properties are not violated (see Sec.~\ref{sec: system model}). More formally, we refer for consensus-path confirmation to procedure $\FinalizeTransactions()$ in Alg.~\ref{alg: ledger}.




\section{Protocol description}\label{sec: protocol}
During the execution of the protocol, each node maintains its own DAG $\D$, which can only be extended by adding new vertices in the state update phase. During slot $s$, a node adapts one backbone chain $\chain(\sigma_{s-2})=(\sigma_{-1},\sigma_0,\ldots,\sigma_{s-2})$. At the last round of slot $s$, the node updates the chain with the digest $\sigma_{s-1}$ for slot $s-1$ such that $\BeforeCommit(\sigma_{s-1})=\sigma_{s-2}$. Switching the backbone chain can happen only at the state update phase of the first round of a slot. Switching the chains is important for synchronization after round $GST$ when the asynchronous period ends. The broadcasting of blocks during the first $f+1$ rounds of slot $s$ is used to ``synchronize'' the nodes' perception of the previous slot $s-1$.
In particular, after $GST$, all correct nodes with the same initial digest for slot $s-2$ do agree on the blocks of slot $s-1$ at round $\langle s, f+2\rangle$ and do create the same slot digest for slot $s-1$. The flow of procedures is described in Alg.~\ref{alg: consensus}. The following sections delve into the different components of the protocol.\\
\begin{algorithm}[t]
\scriptsize
\caption{Slipstream protocol}
\label{alg: consensus}
\Loc{}{
$\D\gets \{Genesis\}$ \tcp*{DAG maintained by the node}
$\buffer\gets \{Genesis\}$ \tcp*{Buffer maintained by the node}
$\chain\gets (0)$ \tcp*{backbone chain adapted by the node}

$\LOG\gets \{\}$ \tcp*{Ledger maintained by the node}
$\Order_{\text{own}} \gets(Genesis)$ \tcp*{Optimistic order of blocks}
$\Order_{\text{final}} \gets(Genesis)$ \tcp*{Final order of blocks}
$s_{\text{final}}\gets 0$\tcp*{Slot index of the last final slot digest}
$s_{\text{pre}}\gets 0$ \tcp*{Slot index of the last final slot digest in $\D(s_{\text{final}})$, see Def.~\ref{def: final time}}
$\text{pk}, \text{sk}$ \tcp*{Public and secret keys of the node}

$B_{\text{own}} \gets \{\}$ \tcp*{Last block created by the node}
$\EqSet\gets \{\}$ \tcp*{Set of equivocators known to the node}
}
\For{\textup{slot index} $s=1,2,3, \ldots$}{
\For{\textup{round index} $i=1,\ldots,f+2$}{
\textbf{\textsc{Receive phase}}\;
\textsc{ReceiveMessages()} \tcp*{see Sec.~\ref{sec: protocol}.\ref{sec: receiving messages}}
\hrulefill

\textbf{\textsc{State update phase}} \;
\textsc{UpdateHistory()} \tcp*{see Sec.~\ref{sec: protocol}.\ref{sec: update the history}}
\textsc{UpdateDAG}() \tcp*{see Alg.~\ref{alg: updateDAG}}

\If{$i=1$}{
\eIf{\textup{the node was slot}-$(s-1)$ \textup{asleep}}{$\WakeUpChain()$ \tcp*{see line~\ref{proc:WakeUpChain} in Alg.~\ref{alg: chain}}}{$\SwitchChain()$ \tcp*{see line~\ref{proc:SwitchChain} in Alg.~\ref{alg: chain}}}} 
\If{$i=f+2$}{$\UpdateChain()$ \tcp*{see line~\ref{proc:UpdateChain} in Alg.~\ref{alg: procedures CreateBlock UpdateChain FinalizeSlot}}}
$\ConfirmTransactions()$ \label{proc: fast path} \tcp*{see Alg.~\ref{alg: ledger}}
$\FinalizeSlots()$ \tcp*{see line~\ref{proc:FinalizeSlots} in Alg.~\ref{alg: procedures CreateBlock UpdateChain FinalizeSlot}}
$B_{\text{own}}\gets \CreateBlock{}$ \label{alg: bown created}\tcp*{see line~\ref{fn: create block} of Alg.~\ref{alg: procedures CreateBlock UpdateChain FinalizeSlot}}
\hrulefill

\textbf{\textsc{Send phase}} \;
$\Broadcast(B_{\text{own}})$ \tcp*{see Sec.~\ref{sec: protocol}.\ref{sec: best-effort broadcast} }
} }
\end{algorithm}
\begin{enumerate}
\item \label{sec: creating a block}\textbf{Creating a block:} Creating a new block $B_{\text{own}}$ with timestamp $\langle s,i \rangle$ happens at the end of the state update phase, see line~\ref{alg: bown created} of Alg.~\ref{alg: consensus}. For this purpose, nodes use function $\CreateBlock()$, see Alg.~\ref{alg: procedures CreateBlock UpdateChain FinalizeSlot}.  The node includes the timestamp $\langle s,i\rangle$, a set of transactions, the currently adopted slot digest $\sigma_{s-2}$,\footnote{This happens for round $\langle s,i\rangle$ with $i<f+2$. For round  $\langle s,f+2\rangle$, the node includes $\sigma_{s-1}$, the digest corresponding to slot $s-1$.} and hash references to all unreferenced blocks in its DAG.\\ 
\item \label{sec: best-effort broadcast} \textbf{Best-effort broadcast:} The key principle in broadcasting a newly created block $B_{\text{own}}$ using $\textsc{Broadcast()}$ is to ensure that the recipient knows the past cone of the new block and, thus, might be able to update its DAG. Each node needs to track the history of the known blocks by every other node, e.g., in the perception of a given node, the set of known blocks by node $\eta$ is denoted as $\History[\eta]$. So, when sending the newly created block $B_{\text{own}}$, the node sends to node $\eta$ the set $\cone(B_{\text{own}})\setminus \History[\eta]$ and makes the update as follows $\History[\eta]\gets \History[\eta]\cup \cone(B_{\text{own}})$.\\ 
\item \label{sec: receiving messages} \textbf{Receiving messages:} 
In the receive phase of a round, the protocol executes a procedure called $\textsc{ReceiveMessages}()$, which allows to receive messages, i.e., blocks from nodes. When receiving a new block $B$, the node puts $B$ into $\buffer$. Before $GST$, the node might not receive a block $B$ with $\BS{B}{time} < GST$ created by a correct node. After $GST$, each block $B$ with timestamp $\langle s, i\rangle= \BS{B}{time} \ge GST$, created by a correct node, is received by any other correct node at the next round $\langle s, i+1\rangle$ (or $\langle s+1, 1\rangle$ in case $i=f+2$). \\ 
\item \label{sec: update the history} \textbf{Updating the history:}
The state update phase includes the procedure $\textsc{UpdateHistory()}$, which updates the set of known blocks by other nodes. When the causal history of a block $B$ is known to the node, i.e., $\cone(B)\subset \buffer$, the node updates the history $\History[\eta]\gets \History[\eta]\cup \cone(B)$, where $\eta=\BS{B}{node}$.\\
\item \textbf{Updating the backbone chain:}
The state update phase of the last round $\langle s, f+2\rangle$ of slot $s$ includes the procedure, $\UpdateChain()$, that extends the backbone chain $\chain(\sigma_{s-2})$ by digest $\sigma_{s-1}$, see Def.~\ref{def: commitment} and Alg.~\ref{alg: procedures CreateBlock UpdateChain FinalizeSlot}. It also naturally orders blocks in the maintained DAG that are created at or before slot $s-1$ and updates $\Order_{\text{own}}$ as $\Order(\sigma_{s-1})$.\\
\item \textbf{Updating the DAG:}
The procedure $\UpdateDAG()$ in Alg.~\ref{alg: updateDAG} updates the DAG $\D$ maintained by the node. Suppose the current timestamp is $\langle s, i\rangle$ and that the buffer contains a block $B$ with timestamp $\BS{B}{time}=\langle s, i-1\rangle$ (or $\langle s-1, f+2\rangle$ for $i=1$) and the same digest as the node adopts. If the past cone $\cone(B)$ is in $\buffer$; the node adds to the DAG $\D$ the whole past cone $\cone(B)$ if all the updating (U) conditions are met: 

\begin{enumerate}
 \item[U1]  No block at slot $s$ in $\cone(B)$ is created by an equivocator in $\EqSet(\sigma_{s-2})$, see Def.~\ref{def: DAG associated} and line~\ref{alg: check equivocators} of Alg.~\ref{alg: updateDAG}. \label{item: U1}
 \item[U2] $\cone(B)$ is valid, e.g., all blocks in $\cone(B)$ that are created at slot $s$ contain the same digest $\sigma_{s-2}$, see Def.~\ref{def: validDAG} and line~\ref{alg: add cone} of Alg.~\ref{alg: updateDAG}.\label{item: U2}
 \item[U3] For every block $C$ in $\cone(B)\setminus \D$ which is not committed to the chain $(\sigma_{-1},\sigma_0,\ldots,\sigma_{s-2})$ and created at slot $s-1$ or before, the node computes its \textit{reachable number} $\ReachNumber$ $(C,B)$ - the number of distinct nodes who created blocks in $\cone(B)$ at slot $s$ from which one can reach $C$. It must hold that $\ReachNumber(C,B)\ge i-1$, see line~\ref{alg: reachnumber} of Alg.~\ref{alg: updateDAG}. \label{item: U3}
\end{enumerate} 
\item \label{sec: chain-switching rule}\textbf{Chain switching rule:} 
If the node was slot-$s$ awake, the node calls the procedure $\SwitchChain()$ at round $\langle s+1, 1 \rangle$, see Alg.~\ref{alg: chain}.
\begin{figure}[t]
\centering
\includegraphics[width=\textwidth]{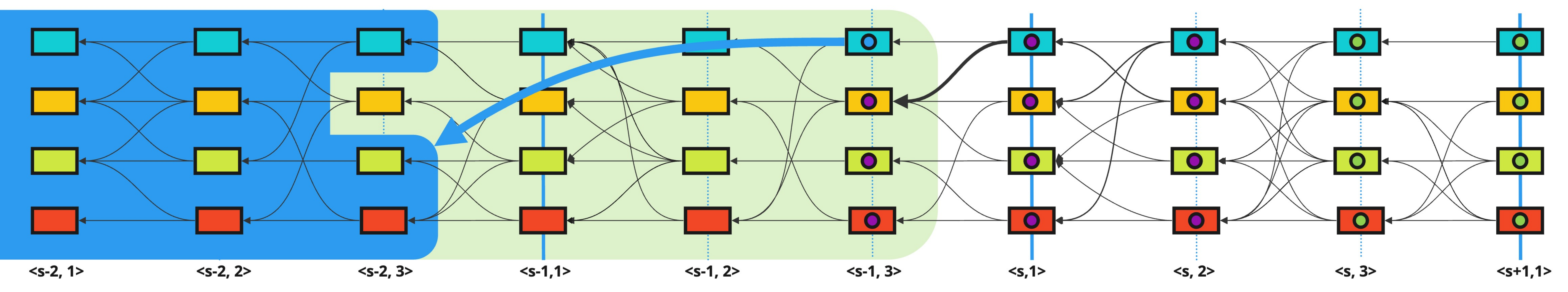}
\caption{We are in the situation, which is similar to Fig.~\ref{fig: commitments} up to round $\langle s,1\rangle$. Now, the blue node realizes that its digest on the ``blue'' past is not the one of the majority and merges its DAG with the past cone of the leader, which is the orange node. So, it switches to the ``purple'' chain at round $\langle s, 1 \rangle$. As before, the purple digest becomes final, and at round $\langle s, 3 \rangle$, all nodes have the same perception of what happened until slot $s-1$ and create the same ``green'' digest.  }
\label{fig: chainswitching}
\end{figure}

First, the node checks how many other nodes generated the same digest $\sigma_{s-1}$ as the node after running $\UpdateChain()$ at the previous round.
Let $N_\text{total}$ be the number of nodes having issued a block at round $\langle s, f+2 \rangle$ that is received by a node at round $\langle s+1, 1 \rangle.$ Let $N_\text{same}$ be the number of those nodes having the same digest as the given node.  To compute these values, the node uses the function $\LastBlockFrom(node)$ that outputs the latest block created by $node$ and stored in  $\buffer$. 

Before checking the condition for switching the chain, the node runs the procedure $\CheckELSS()$ in Alg.~\ref{alg: chain} that updates the indicator $I_{\text{ELSS}}$ to $1$ if there have been two different digests for slot $s-1$ with more than $f+1$ supporters. Recall that we have two different communication models, the ELLS model, Def.~\ref{def:ELSS}, and the SS model, Def.~\ref{def: slot-sleepy model}.  Having two different digests with more than $f+1$ supporters indicates that we are not in the SS model. Similarly, in line \ref{alg: change ELSS 1}, the digest certificate of any node can not conflict with the digest of a correct node in the SS model, and if so, $I_{\text{ELSS}}$ turns to $1$. 


If the node has not finalized the digest at the previous slot, $s$, a leader-based tie-breaker mechanism is employed.  To determine the leader, the node invokes the $\Leader()$ function, Def.~\ref{def: leader function}. 
This procedure allows the node to switch its chain, see line~\ref{alg: has not finalized and leader}, in two following cases:
\begin{enumerate}
    \item $2 N_\text{same}\leq N_\text{total}$: if the digest of the leader is consistent with the own digest OR the node's digest certificate is created not later than the one of the leader, the node updates the DAG by adding the past cone of the leader block $B_\text{leader}$ and adopts the leader's backbone chain together with its induced optimistic ordering for blocks. 
    \item $2 N_\text{same}> N_\text{total}$: if the ELSS indicator is true, i.e., indicating that the underlying model is the ELSS model, AND the node's digest certificate is created not later than the one of the leader, the node merges the past cone of the leader block with its maintained DAG and adopts the leader's backbone chain and induced final ordering for blocks.  
\end{enumerate}
\item \label{sec: waking up} \textbf{Waking up rule:}
Suppose a node is slot-$(s+1)$ awake and was slot-$s$ asleep. The node can detect this by inspecting the local DAG at round $\langle s+1,1 \rangle$ and checking the existence of block $B_{\text{own}}$ at round $\langle s,f+2 \rangle$. If no such block exists, then the node proceeds with $\WakeUpChain()$, see line \ref{proc:WakeUpChain} in Alg.~\ref{alg: chain}.

Using function $\Mode(M)$, the node checks which digest, say $\sigma_{s-1}$, is the most frequent one among all created by non-equivocator nodes at round $\langle s, f+2\rangle$. Then the node adopts the backbone chain to the chain corresponding to $\sigma_{s-1}$ and adapts the induced ordering for blocks. In addition, it updates the local DAG with past cones of all blocks that include $\sigma_{s-1}$.\\
\item \label{sec: finality}\textbf{Finality rule:}  
Using Def.~\ref{def: final commitment}, the node finalizes slot digests using procedure $\FinalizeSlots()$ at line \ref{proc:FinalizeSlots} in Alg.~\ref{alg: procedures CreateBlock UpdateChain FinalizeSlot}. For this purpose, the node tries to find a quorum of digest certificates.
Once a digest $\sigma$ is final, a final ordering $\Order_{\text{final}}$ of the blocks committed to $\sigma$ is obtained using the function $\Order()$, see Eq.~\eqref{eq: ordering}. The exact nature of this order is not important for the results of this paper as long as it preserves the causal order of the blocks.\\
\item \label{sec: confirmation rule}\textbf{Confirmation rule:} UTXO transactions can be confirmed either through the fast path or the consensus path. At state update of each round the protocol runs 
$\ConfirmTransactions()$ for fast-path confirmation, see line~\ref{proc: fast path} in Alg.~\ref{alg: consensus}. The DAG $\D$ is inspected to find a quorum of transaction certificates for all possible transactions; see Def.~\ref{def: confirmedTx} and line~\ref{proc: confirmTransactions} in Alg.~\ref{alg: ledger}. For the consensus path, the protocol runs procedure $\FinalizeTransactions()$. As said in Sec.~\ref{sec: transaction finality}, the consensus path relies on the notion of the finality time. Since the latest finality time could be updated only when the last final slot is updated, $\FinalizeTransactions()$ is executed inside $\FinalizeSlots()$ at line~\ref{proc: finalizeTransactions} in Alg.~\ref{alg: procedures CreateBlock UpdateChain FinalizeSlot}. In procedure $\FinalizeTransactions()$, one first gets the latest finality time $\tau$. Then, for each transaction included in blocks $B\in D(\sigma_{\tau})$ with $B.\text{slot} \le \tau-2$, one checks the existence of a transaction certificate, and if such one exists, the transaction is attempted to be added to $\LOG$. The attempt is successful if the two UTXO ledger properties are preserved (see Sec.~\ref{sec: system model}). Then, using the total order, all transactions in blocks $B\in D(\sigma_{\tau-2})$ without certificates are similarly attempted to be added to $\LOG$. 
\end{enumerate}

\section{Formal results} \label{sec: results}
\begin{theorem}[Safety and liveness for $\Order_{\text{own}}$]\label{th: dynamically available}
Assume the network operates in the \textup{SS} model, e.g., $GST=0$. Assume a majority of slot-$s$ awake correct nodes for all slots $s\in\mathbb{N}$, i.e., the number of correct nodes is strictly larger than the number of Byzantine nodes for every slot. Then it holds
\\ \textbf{Safety:} Every slot-$s$ awake correct node adopts the same digest for slot $s-2$, i.e., for any two sequences of blocks maintained by any two correct nodes in $\Order_{\textup{own}}$, one sequence must be a prefix of the other.
\\\textbf{Liveness:} Every block created by a correct node at slot $s$ is included in the digest of slot $s$, i.e., it appears in $\Order_{\textup{own}}$ after slot $s+1$.
\end{theorem}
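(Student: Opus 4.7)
The plan is to prove safety and liveness simultaneously by induction on the slot index $s$, using the invariant that at the start of slot $s$ every slot-$s$ awake correct node adopts the same digest $\sigma_{s-2}$ and that every correct-issued block from slot $s-2$ lies in $\D(\sigma_{s-2})$. The base case $s=2$ is immediate, since $\sigma_0=\Hash(0,Genesis)$ is the same for every node. For the inductive step I must show, at the end of slot $s$, that all slot-$s$ awake correct nodes compute the same $\sigma_{s-1}$ in $\UpdateChain$ and that slot-$(s+1)$ awake correct nodes that were slot-$s$ asleep adopt the same $\sigma_{s-1}$ via $\WakeUpChain$, while no slot-$s$ awake correct node reverses its choice inside $\SwitchChain$.

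The central technical tool I intend to prove is a synchronization lemma: for every slot-$\leq s-1$ block $C$ that is not yet committed to $\sigma_{s-2}$, either every slot-$s$ awake correct node has $C$ in its DAG at round $\langle s, f+2\rangle$ or none does. The positive direction follows by an induction on the earliest round $\langle s, j\rangle$ at which some slot-$s$ awake correct node $A$ first acquires $C$. Because $A$ did not contain $C$ prior to that round, none of $A$'s earlier slot-$s$ blocks reach $C$, so the $j-1$ distinct slot-$s$ identities witnessing U3 when $A$ added $C$ exclude $A$ itself; consequently the block $B_A$ that $A$ creates at round $\langle s, j\rangle$ satisfies $\ReachNumber(C, B_A)\geq j$ by adding $A$ as a fresh contributor. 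Lock-step synchrony together with best-effort broadcast ships $B_A$ and its entire cone to every correct awake peer at round $\langle s, j+1\rangle$, exactly meeting the U3 bound and allowing each peer to include $C$; iterating at most $f+1$ times proves propagation by round $\langle s, f+2\rangle$. For the converse direction, if $C$ were added by a correct node only at round $\langle s, f+2\rangle$, U3 would demand $\ReachNumber \geq f+1$, and since at most $f$ distinct Byzantine identities can appear, at least one correct slot-$s$ block in the cone must reach $C$; but then its creator already had $C$ at an earlier round, contradicting minimality.

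Given this synchronization, the digest $\sigma_{s-1}$ is a deterministic function of $\sigma_{s-2}$ (common by hypothesis) and the synchronized slot-$\leq s-1$ uncommitted subDAG, so all slot-$s$ awake correct nodes output the same $\sigma_{s-1}$ inside $\UpdateChain$. I will then check that $\SwitchChain$ does not undo this: with $n=2f+1$ in the SS model, two distinct digests cannot each attract strictly more than $f+1$ supporters at round $\langle s, f+2\rangle$, so $I_{\text{ELSS}}$ remains $0$; moreover, the strict correct majority among slot-$s$ awake nodes yields $2N_{\text{same}}>N_{\text{total}}$, and with $I_{\text{ELSS}}=0$ neither branch of $\SwitchChain$ fires, so the node keeps $\sigma_{s-1}$. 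For a slot-$(s+1)$ awake correct node that was slot-$s$ asleep, $\WakeUpChain$ runs $\Mode$ over the non-equivocating $\langle s, f+2\rangle$ blocks; the same correct-majority count returns $\sigma_{s-1}$, and ingesting the past cones of these blocks transports the synchronized slot-$\leq s-1$ subDAG into the waking node, completing the inductive step for safety.

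Liveness will then follow from the same machinery. A block $B$ issued by a correct node at round $\langle s, i\rangle$ is transmitted with $\cone(B)$ to every awake peer in the next round; the U3 conditions on older cone members are handled by the synchronization lemma within slot $s$, so $B$ lies in every slot-$s$ awake correct node's DAG by round $\langle s, f+2\rangle$ (or enters via the cone ingestion of $\WakeUpChain$ when $i=f+2$). During slot $s+1$, $B$ is now a slot-$\leq s$ block present in at least one slot-$(s+1)$ awake correct node, so applying the synchronization lemma to slot $s+1$ propagates $B$ to every slot-$(s+1)$ awake correct node by round $\langle s+1, f+2\rangle$; $\UpdateChain$ then places $B$ into $\sigma_s$, and $B$ appears in $\Order_{\textup{own}}$ from slot $s+2$ onward. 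The principal obstacle I expect is the freshness argument inside the synchronization lemma, namely verifying that the correct node adding $C$ contributes a genuinely new identity to the reach count rather than duplicating one already tallied among the $j-1$ witnesses; this freshness is precisely what makes the U3 threshold grow in lock-step with the round index and drives the Dolev-Strong-style amplification across the slot.
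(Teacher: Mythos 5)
Your plan is correct and follows essentially the same route as the paper: your synchronization lemma is the paper's key lemma (Lem.~\ref{lem: same sc}), proved by the same Dolev-Strong-style first-acquirer/fresh-witness amplification of the U3 reach count (including the separate treatment of the last round via the $\ReachNumber\ge f+1$ pigeonhole), and the surrounding induction over slots with the $\SwitchChain()$/$\WakeUpChain()$ case analysis and the synchrony-based liveness argument matches the paper's proof. The only small patch needed is in your claim that $I_{\text{ELSS}}$ stays $0$: the trigger in $\CheckELSS()$ fires at \emph{at least} $f+1$ supporters (not strictly more than $f+1$), and you omit the second trigger, namely a digest certificate conflicting with the node's own digest in line~\ref{alg: change ELSS 1} of Alg.~\ref{alg: chain}; both are ruled out in the SS model by the same inductive hypothesis (all awake correct nodes share one digest, and at most $f$ Byzantine nodes exist), exactly as the paper argues.
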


\begin{theorem}[Safety and liveness for $\Order_{\text{final}}$]\label{th: eventually syncrony}
Assume the network operates in the \textup{ELSS} model (e.g., $GST>0$ is unknown to nodes).
Assume a supermajority of correct nodes, i.e., $n=3f+1$. Then it holds
\\ \textbf{Safety:}  
For any two sequences of final blocks maintained by any two correct nodes in $\Order_{\textup{final}}$, one sequence must be a prefix of the other.
\\ \textbf{Liveness:} Before $GST$, if a correct node creates a block, then the block eventually appears in $\Order_{\textup{final}}$. After $GST$, there is a slot $s_{\textup{same}}$ after which every block created by a correct node at slot $s\ge s_{\textup{same}}$ appears in $\Order_{\textup{final}}$ after round $\langle s+2, 2\rangle$. The number of slots $s_{\textup{same}}-GST$ is stochastically dominated by  $\xi_1+\xi_2$, where $\xi_1$ and $\xi_2$ are independent and identically distributed geometric random variables with parameter $p=\tfrac{1}{n}$. In particular, the expected number of slots to sync correct nodes with the same slot digest satisfies $\mathbb{E}(s_{\textup{same}})\le GST+2n$.
\end{theorem}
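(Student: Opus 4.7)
The argument splits into a safety half and a liveness half, and mirrors the informal overview after the protocol description. The key invariants I would establish first, as reusable lemmas, are: (i) for each slot $s$, a correct node only signs one block per round and that block carries a single digest, so two $(2f+1)$-quorums of such blocks in slot $s$ intersect in at least $f+1$ nodes and hence share a correct signer; (ii) once a correct node holds a digest certificate for $\sigma$ at slot $\BS{\sigma}{slot}+2$, the procedure $\SwitchChain()$ only lets it adopt a chain whose digest is consistent with $\sigma$ or whose certificate is not older than its own.

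\textbf{Safety.} Suppose two correct nodes finalize digests $\sigma$ and $\sigma'$ with $\IsConflict(\sigma,\sigma')=1$, and let $s=\min(\BS{\sigma}{slot},\BS{\sigma'}{slot})$. I would first do the same-slot case $\BS{\sigma}{slot}=\BS{\sigma'}{slot}=s$: finality produces a quorum of $\mathrm{DC}$s for each, hence in particular one $\mathrm{DC}$ for $\sigma$ and one for $\sigma'$; each $\mathrm{DC}$ at slot $s+2$ contains a $(2f+1)$-quorum of slot-$(s+2)$ blocks carrying the digest in question, and by invariant~(i) these quora share a correct node $\eta$, contradicting the fact that $\eta$ signs at most one slot-$(s+2)$ block per round, each carrying a single digest. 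For the cross-slot case one uses invariant~(ii): the correct node finalizing at the larger slot must, at some earlier round, have been holding a $\mathrm{DC}$ for the chain containing $\sigma$ (by the finality quorum argument), and the chain-switching conditions would forbid it from adopting the conflicting branch. This immediately gives that two finalized digests are consistent, so $\Order_{\text{final}}$ sequences are prefixes of each other.

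\textbf{Liveness before GST.} Here I only need to argue eventual inclusion. Any block $B$ created by a correct node is in $\buffer$ of all correct nodes after GST (messages are not lost, only delayed). After GST the synchronization argument below shows that a common backbone chain is eventually adopted and grows, and at that point $B$ (being in the past cone of later correct blocks through $\UpdateDAG$) gets committed to some digest $\sigma_s$, which in turn is eventually finalized by the post-GST argument, so $B\in \Order_{\text{final}}$.

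\textbf{Liveness after GST and latency.} Let $\eta$ be a correct node and consider a slot $s>GST$ at which $\Leader()$ picks $\eta$ in the perception of every correct node; by Def.~\ref{def: leader function} and honesty this happens with probability at least $1/n$ at each slot, and the event ``$s$ is the first such slot after $GST$'' is therefore stochastically dominated by a geometric random variable $\xi_1$ with parameter $1/n$. Because the network is lock-step synchronous after GST, all correct nodes enter slot $s$ with the same set of received slot-$(s-1)$ blocks of correct nodes, and the chain-switching branches of $\SwitchChain()$ (leader consistent, or leader's certificate not older) provably drive every correct node to adopt $\eta$'s backbone chain $\chain(\sigma_{s-2})$ and merge $\cone(B_\text{leader})$ into the local DAG; I would verify separately that the $I_{\text{ELSS}}$ branch is the one that triggers when $2N_{\text{same}}>N_{\text{total}}$, using the observation that an $(f+1)$-support on two digests can occur only in ELSS. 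From that slot onward, all correct nodes propose from a common digest, and invariant (U3) of $\UpdateDAG$, together with the $f+2$ rounds per slot, guarantees that every correct node's block of slot $s$ is reachable from a $(2f+1)$-quorum of slot-$s$ correct blocks by round $\langle s+1,1\rangle$, so every node computes the same $\sigma_{s-1}$. The next geometric waiting time $\xi_2$ counts the slots until a second correct-leader event lifts a fresh DC into a $\mathrm{DC}$-quorum and finalizes the synchronized digest; once that happens, every subsequent slot has $2f+1$ correct blocks carrying the same digest, so each of its digests $\sigma_s$ collects $2f+1$ $\mathrm{DC}$s by round $\langle s+2,2\rangle$ and is finalized there. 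Setting $s_{\text{same}}=GST+\xi_1+\xi_2$ yields the stochastic dominance claim, and $\mathbb{E}(\xi_i)=n$ yields $\mathbb{E}(s_{\text{same}})\le GST+2n$.

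\textbf{Main obstacle.} The delicate point is the post-GST synchronization step: showing that the two-case chain-switching rule really does force every correct node to the leader's chain after a single correct-leader slot, without ever violating the safety invariant (ii) for previously finalized digests. The interaction between the ELSS indicator, the ``$N_{\text{same}}$ majority'' test, and the certificate-freshness test is the place where the proof needs to be carried out carefully.
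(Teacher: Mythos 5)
Your overall architecture matches the paper's (per-slot DC uniqueness by quorum intersection, locking via the chain-switch rule for cross-slot safety, a leader-driven synchronization after GST followed by deterministic finalization), but the liveness half as written has two genuine gaps — and you correctly sensed where they are. First, taking $\eta$ to be an \emph{arbitrary} correct node does not work: the switching rule only lets a node follow the leader when $\textup{DC}_{\textup{leader}}.\mathrm{slot}\ge \textup{DC}_{\textup{own}}.\mathrm{slot}$ (or the digests are consistent), so a correct node holding a newer digest certificate than $\eta$'s will refuse to switch. The paper instead waits for the slot in which the elected leader is a correct node whose latest DC has the \emph{maximal} slot among all correct nodes; only then does the freshness test pass for everyone. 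Second, your accounting of $\xi_1+\xi_2$ is off: once all correct nodes share a digest, finalization is deterministic within three rounds (blocks of round $\langle s,2\rangle$ already form a quorum of DCs), so no second random wait is needed for finalization. The second geometric is needed for a different reason: in the branch $2N_{\textup{same}}>N_{\textup{total}}$ a node switches only if $I_{\text{ELSS}}=1$, and the adversary can (once) engineer a ``majority group'' of correct nodes with $I_{\text{ELSS}}$ still $0$ that ignores even the good leader. The paper resolves this by showing that either a leader elected from inside that group pulls everyone else over via the non-conflict disjunct, or the adversary burns its one-time trick, all correct nodes see two $(f{+}1)$-supported digests, set $I_{\text{ELSS}}=1$, and the \emph{next} good-leader slot succeeds — hence two independent $\mathrm{Geo}(1/n)$ waits. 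Your sketch (``verify that the $I_{\text{ELSS}}$ branch triggers'') glosses over exactly the case where it does not trigger.

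On safety, your same-slot case is the paper's Lemma on DC uniqueness, but the cross-slot case is under-argued: invariant (ii) by itself does not forbid a locked node from joining the conflicting branch, because the rule \emph{does} allow switching when a conflicting DC of slot $\ge$ its own is exhibited. The paper closes this loop with a minimality argument: consider the first slot $r>s$ at which a DC conflicting with $\sigma$ is created; its quorum must contain a block of one of the $f+1$ locked correct nodes (it is these quorum members, not ``the node finalizing at the larger slot,'' that are constrained), and that node could only have switched if a conflicting DC existed \emph{before} slot $r$, contradicting minimality. Without this bootstrapping step the safety claim is circular. So: right skeleton, but the leader-selection criterion, the reason for the second geometric variable (including the single-majority-group/$I_{\text{ELSS}}=0$ case), and the minimal-slot induction in cross-slot safety all need to be supplied.
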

\begin{theorem}[Safety, liveness, and consistency for $\LOG$]\label{th: confirmation}
Assume the network operates in an \textup{ELSS} model.
Assume a supermajority of correct nodes, i.e., $n=3f+1$. Then it holds
\\ \textbf{Safety:} Two UTXO transactions that constitute a double-spend never both appear in $\LOG$. 
\\\textbf{Liveness:} Every cautious honest UTXO transaction  eventually added to $\LOG$ through the consensus path. There is a round after $GST$ such that every cautious, honest UTXO transaction, included in a block by a correct node at round $i$, gets fast-path confirmed and appears in $\LOG$ after round $i+2$.
\\ \textbf{Consistency:} If a UTXO transaction is fast-path confirmed by a correct node, then it eventually appears in $\LOG$ of every correct node.
\end{theorem}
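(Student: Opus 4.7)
The proof splits along the three properties, with quorum-intersection arguments driving safety, Theorem~\ref{th: eventually syncrony} driving liveness, and DAG gossip together with the finality-time partition yielding consistency. First, for \textbf{safety}, I will show no correct node ever inserts double-spending transactions $tx_1, tx_2$ into its $\LOG$. The consensus-path branch is immediate, since each insertion in $\FinalizeTransactions$ (Alg.~\ref{alg: ledger}) refuses a transaction whose conflict is already in $\LOG$ or whose inputs are missing, and $\LOG$ is append-only. The fast-path branch is the substantive part: if both $tx_1, tx_2$ are fast-path confirmed, the local DAG must hold, for each $j$, a quorum of TCs for $tx_j$ in some block $B_j$, and each such TC contains a quorum of approvers in its past cone. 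Letting $N_j$ be the set of node identities owning an approving block for $tx_j$, we have $|N_j|\ge 2f+1$, hence $|N_1\cap N_2|\ge f+1$ and some correct node $p$ lies in both. Since $p$'s own blocks form a linear chain through the self-reference rule of $\CreateBlock$, whichever of $p$'s two approvals comes later must contain the earlier one --- and therefore a double-spend of its own approved transaction --- in its past cone, contradicting Def.~\ref{def: transaction approval}. The cross-path case reduces to these two by the append-only conflict check.

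Next I would treat \textbf{liveness} one path at a time. For consensus-path liveness on a cautious honest transaction $tx$ included in a block $B$, Theorem~\ref{th: eventually syncrony} implies that $B$ enters $\Order_{\text{final}}$ eventually, so $\BS{B}{slot}$ is finalized and $tx$ is examined by $\FinalizeTransactions$ once finality advances two slots past it. Honesty of the issuing account precludes any competing transaction on the same UTXO (only the account holder can sign), so the conflict check never rejects $tx$; cautiousness guarantees the inputs are already confirmed at some correct node and hence, by the consistency part below, enter every correct node's $\LOG$ before the consensus path reaches $tx$. Thus $tx$ is added either by the certificate pass or by the final-order pass of $\FinalizeTransactions$. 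For fast-path liveness after $\max\{GST,GAT\}$, I will work in the lock-step regime where correct DAGs are synchronized each round. A block $B$ created by a correct node at round $i$ lies in every correct node's DAG at round $i+1$; each correct round-$(i{+}1)$ block references $B$ and, because $tx$ is $B$-ready (cautiousness and synchrony) and conflict-free (honesty), is an approver in the sense of Def.~\ref{def: transaction approval}. At round $i+2$ every correct block references a quorum of these approvers and is therefore a TC for $tx$ in $B$; one round later every correct node sees a quorum of TCs and $\ConfirmTransactions$ records $tx$ in $\LOG$.

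For \textbf{consistency}, I would rely on DAG gossip and the eventual synchrony of Theorem~\ref{th: eventually syncrony}. If correct node $A$ fast-path confirms $tx$, its DAG holds a quorum of TCs for $tx$ in some $B$; since every block of a correct node is eventually delivered to every other correct node via $\Broadcast$, every correct node's DAG eventually contains the associated approver histories, so each correct node either fast-path confirms $tx$ directly or consensus-path confirms it once $\BS{B}{slot}$ is finalized. The safety argument above rules out a conflicting transaction preempting $tx$ on either path.

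The step I expect to be the main obstacle is the cross-path interaction in consistency --- specifically, showing that the tight window $\BS{C}{slot}\in\{\BS{B}{slot},\BS{B}{slot}+1\}$ of Def.~\ref{def: transaction certificate} is strong enough to guarantee that by the time $\FinalizeTransactions$ processes the slot two finality-times behind the current one, every TC relevant to its transactions has already been observed by every correct node. This is the crucial ingredient legitimising the described delay of the consensus path, and is where the ELSS timing assumptions of Theorem~\ref{th: eventually syncrony} must be invoked most carefully.
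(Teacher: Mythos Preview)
Your overall decomposition matches the paper's, and your liveness arguments are essentially the same as the paper's. The substantive issue lies in the cross-path safety case and in consistency, which are tightly coupled.

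For safety you write that ``the cross-path case reduces to these two by the append-only conflict check.'' This is not correct: inspect $\ConfirmTransactions$ in Alg.~\ref{alg: ledger} --- it performs \emph{no} conflict check before adding $tx$ to $\LOG$. So if a conflicting $tx'$ has already entered $\LOG$ via the second (total-order) loop of $\FinalizeTransactions$, nothing in your argument prevents $tx$ from being fast-path confirmed afterwards. The paper explicitly flags this remaining case and resolves it by invoking the \emph{consistency} argument, which shows that such a $tx'$ can never have been added to $\LOG$ in the first place. Your plan treats safety and consistency as independent; they are not.

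For consistency, you correctly identify the obstacle but do not supply the argument, and your sketch (``DAG gossip eventually delivers the TCs, so every node fast-path or consensus-path confirms $tx$'') misses the crucial point. The danger is not that other nodes fail to see the TCs; it is that some node may already have placed a conflicting $tx'$ into $\LOG$ via the total-order loop at an earlier finality time $\tau'$. The paper's resolution uses \emph{two} quorum-intersection steps tied to the slot window of Def.~\ref{def: transaction certificate}: first, intersect the quorum of approvers of $tx$ with the quorum underlying the digest certificates for $\sigma_{\tau'-2}$; since a correct node in the intersection cannot approve $tx$ after seeing $tx'$ in $\D(\sigma_{\tau'-2})$, one obtains $\BS{B}{slot}\le\tau'-1$, hence all TCs for $tx$ sit at slots $\le\tau'$. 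Second, intersect the quorum of TC-issuers with the quorum of DC-issuers for $\sigma_{\tau'-2}$ at slot $\tau'$; this forces some TC for $tx$ into $\D(\sigma_{\tau'})$, so $tx$ is processed in the \emph{first} loop of $\FinalizeTransactions$ at finality time $\tau'$, before $tx'$ reaches the second loop --- a contradiction. This interplay between the TC slot window and the two-step finality-time lag is the heart of the proof, and it is absent from your proposal.
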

\begin{remark}\label{rem: on communication complexity}
 The amortized communication complexity (for one transaction and big block size) in Slipstream in the ELSS model is $O(n^2)$, which is similar to all other DAG-based BFT protocols that don't use special erasure-code-based BRB primitives such as~\cite{cachin2005asynchronous,das2021asynchronous}. Indeed, because of potentially $n$ hash references to previous blocks, the block size in our protocol is (at least) linear with $n$, the number of nodes. Thereby, one can put $n$ transactions in a block without affecting the order of the block size. Each block needs to be sent by every correct node to every other correct node, i.e., each block makes $O(n^2)$ trips between nodes. After $GST$, consider all transactions in $s$ consecutive slots. For confirming the $s (f+2) n(n-f)$ transactions, committed to these $s$ slots by $n-f$ correct nodes, it takes $O((s+1)(f+2)n^4)$ bits to be communicated, i.e., $O(n^2)$ bits per transaction bit.
\end{remark}

\section{Proofs}\label{sec: proofs}

\subsection{Proof of Theorem~\ref{th: dynamically available}}
The key ingredient for Theorem~\ref{th: dynamically available} is the following key lemma, which can be applied to both the ELSS model and the SS model. It states that under synchrony, two correct nodes starting the slot with the same slot digest will generate identical digests at the end of the slot. The proof of this lemma is inspired by~\cite{dolev1983authenticated}, and it makes use of the lock-step nature of the network models considered in the paper.
\begin{lemma}\label{lem: same sc}
Suppose slot $s$ occurs after $GST$. If two correct nodes have adopted the same digest for slot $s-2$ after completing the state update phase of round $\langle s,1\rangle$, then they generate the same digest for slot $s-1$ after completing the state update phase of round $\langle s,f+2\rangle$.
\end{lemma}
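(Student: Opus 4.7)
By Def.~\ref{def: commitment}, the digest $\sigma_{s-1}$ is a deterministic function of $\sigma_{s-2}$ and the set of blocks in $\D\setminus\D(\sigma_{s-2})$ of slot index at most $s-1$ (ordered canonically by $\Concat$). Since $A$ and $B$ already share $\sigma_{s-2}$ at the end of round $\langle s, 1\rangle$, it suffices to prove that they also agree on this set at the end of round $\langle s, f+2\rangle$. Sharing $\sigma_{s-2}$ additionally means sharing $\D(\sigma_{s-2})$ and $\EqSet(\sigma_{s-2})$, so for any fixed reference block the checks U1 and U2 inside $\UpdateDAG$ return the same verdict for both nodes.

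The core of the proof is a Dolev--Strong-style propagation statement, proved by strong induction on $i\in\{1,\ldots,f+1\}$. Denote by $P(i)$ the claim that, if any correct node has a block $C\in\D\setminus\D(\sigma_{s-2})$ with $\BS{C}{slot}\le s-1$ at the end of round $\langle s, i\rangle$ state update, then every correct node has $C$ by the end of round $\langle s, i+1\rangle$. For the inductive step, let $F_A^i$ be the slot-$s$ block that correct node $A$ produces at round $\langle s, i\rangle$; since $F_A^i$ references the just-updated $\D_A$, we have $C\in\cone(F_A^i)$, and by lock-step synchrony after $GST$ the block $F_A^i$ together with $\cone(F_A^i)$ lies in $B$'s buffer at round $\langle s, i+1\rangle$. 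I then verify that $B$ applies $\UpdateDAG$ successfully on $F_A^i$: U1 and U2 are inherited from $A$; for U3, fix any $C'\in\cone(F_A^i)\setminus\D_B$ with $\BS{C'}{slot}\le s-1$ outside $\D(\sigma_{s-2})$, and let $k\le i$ be the round at which $A$ added $C'$ to $\D_A$. The inductive hypothesis $P(k)$ rules out $k<i$ (it would contradict $C'\notin\D_B$), so $k=i$; hence $A$ added $C'$ using some reference $B_A$ of timestamp $\langle s, i-1\rangle$ witnessing $\ReachNumber(C', B_A)\ge i-1$. Because $C'$ entered $\D_A$ only at round $i$, no earlier slot-$s$ block of $A$ reaches $C'$, so $A$ is a genuinely new observer, giving $\ReachNumber(C', F_A^i)\ge i$ in $B$'s view and matching the U3 threshold for round $\langle s, i+1\rangle$. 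The base case $i=1$ is trivial since $F_A^1$ is the only slot-$s$ block in its own past cone and U3 has threshold $0$.

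The remaining case is that $A$ first puts $C$ into $\D_A$ only at the last round $\langle s, f+2\rangle$, where the induction cannot be pushed further; this is the delicate step. The threshold $\ReachNumber(C, B_A)\ge f+1$ forces $f+1$ distinct slot-$s$ nodes to have blocks in $\cone(B_A)$ reaching $C$, so at least one such node $\eta$ is correct. Since $\BS{B_A}{time}=\langle s, f+1\rangle$ and $\eta$ is correct, $\eta$'s witnessing block $E\in\cone(B_A)$ must have been issued at a round $k\le f+1$, which implies $C\in\D_\eta$ at round $\langle s, k\rangle$. Applying $P(k)$ then places $C$ in $\D_B$ by round $\langle s, k+1\rangle\le\langle s, f+2\rangle$.

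Combining the propagation claim with the terminal case and symmetrising the argument in $A$ and $B$, both nodes end round $\langle s, f+2\rangle$ with identical sets of slot-$\le s-1$ blocks outside $\D(\sigma_{s-2})$; the deterministic $\Concat$ and $\Hash$ calls inside $\UpdateChain$ then yield the same $\sigma_{s-1}$. The main obstacle throughout is controlling U3 for the incidental blocks $C'$ that happen to sit in $\cone(F_A^i)\setminus\D_B$; the strong induction works only because $A$'s fresh addition of $C'$ forces $A$'s prior slot-$s$ blocks not to reach $C'$, so $A$ contributes a brand-new observer that raises the reach number by exactly one --- precisely what U3 demands one round later.
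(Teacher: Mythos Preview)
Your proposal is correct and follows the same Dolev--Strong propagation argument as the paper, making explicit the strong induction that the paper compresses into ``Similarly, one can show that the reachable number of all other blocks\ldots''. One small caveat: for the terminal case to close, $P(i)$ must range over all correct nodes that have adopted $\sigma_{s-2}$ (not just $A$ and $B$), and you should note that the correct observer $\eta$ necessarily carries digest $\sigma_{s-2}$ since its slot-$s$ block sits in the digest-valid $\cone(B_A)$---the paper glosses over this point as well.
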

\begin{proof}
Let $\eta$ and $\mu$ be two correct nodes that adopted the slot digest $\sigma_{s-2}$ after the state update phase of round $\langle s,1\rangle$. Consider an arbitrary block $B$ which is created at slot $s-1$ or before and included by node $\eta$ in the digest $\sigma_{s-1}$ for slot $s-1$. Toward a contradiction, assume that $B$ is not included by $\mu$ in the digest for slot $s-1$. Let $i$ be the first round of slot $s$ during the state update phase, of which $\eta$ has added $B$ to its maintained DAG. Note that $B$ was added together with the past cone of a certain block, say block $E$ (see line~\ref{alg: add cone} of Alg.~\ref{alg: updateDAG}). Technically, that could happen at slot $s-1$, and we treat $i$ as $1$ in this case. The block $B$ satisfies the condition U3 in the perception of node $\eta$. Thus, it holds that $\ReachNumber(B,E)\ge i - 1$.  Next, consider two cases depending on the value of $i$.

\textit{First case:} Consider $1\le i \le f+1$. Then, at the send phase of round $\langle s, i\rangle $, node $\eta$ creates block $C$ such that $\cone(C)$ contains $E$ (and consequently $B$) and sends it. At the next round $\langle s, i+1\rangle $, node $\mu$ receives the block $C$ with its past cone and updates the DAG with $B$, which leads to the contradiction.  Indeed, the buffer of $\mu$ contains $\cone(C)$ and all the U-conditions (see Sec.~\ref{sec: protocol}.6 are met:\\
U1: No block at slot $s$ in $\cone(C)$ is created by equivocator from $\EqSet(\sigma_{s-2})$ since $\BS{C}{node} = \eta$ and $\eta$ is a correct node that checked the same condition in all the previous rounds of slot $s$.\\
U2: The past cone of block $C$ is valid due to Lem.~\ref{lem:DAGisValid}.\\
U3: The reachable number $\ReachNumber(B,C)\ge i$, as the node $\eta$ was not counted previously. Similarly, one can show that the reachable number of all other blocks in $\cone(C)$, which are not yet included in the DAG of $\mu$, satisfies the same condition U3.

\textit{Second case:} Consider $i = f+2$. In this case, at the time of adding $B$ by $\eta$ to its maintained DAG, the reachable number $\ReachNumber(B,E)\ge i - 1 = f+1$. This means that $B$ was added to the DAG by at least one correct node at or before round $\langle s, f+1 \rangle $. By applying the arguments from the first case, we conclude that $\mu$ has included $B$ in its maintained DAG at or before round $\langle s, f+2 \rangle$. This leads to a contradiction.

The above arguments imply that after the state update phase of the round $\langle s, f+2\rangle$ both nodes $\eta$ and $\mu$ have in their DAGs the same set of blocks that are created at slot $s-1$ or before. Thereby, they generate the identical digest $\sigma_{s-1}$.
\end{proof}
\begin{proof}[Proof of Theorem~\ref{th: dynamically available}] First, we prove the safety property.
Let $\eta$ be a slot-$(s+1)$ awake correct node. We proceed by an induction on $s$. The base case $s=0$ holds as the digest $\sigma_{-1}$ is fixed and known to all nodes. In the following, consider $s\ge 1$. Consider two cases depending on whether node $\eta$ was awake or asleep at the previous slot. 

\textit{First case:} Suppose $\eta$ was slot-$s$ awake and adopted digest $\sigma_{s-2}$ in slot $s$. By Lem.~\ref{lem: same sc}, $\eta$ generated the same digest for slot $s-1$ at the end of slot $s$ as all other awake correct nodes who adopted $\sigma_{s-2}$ at the beginning of slot $s$. By the inductive hypothesis, the number of such correct nodes holds the majority among all slot-$s$ awake nodes, i.e., $2N_{\text{same}}\ge N_{\text{total}}+1$. By the inductive hypothesis, the indicator $I_{\text{ELSS}}$ will never change its value from $0$ in the SS model. Indeed, this indicator can be changed in lines~\ref{alg: change ELSS 1} and~\ref{alg: change ELSS 2} of Alg.~\ref{alg: chain}. As for line~\ref{alg: change ELSS 1}, it is not possible in the SS model to construct a digest certificate conflicting with the current digest of a correct node, i.e., $\IsConflict(\Commit(\text{DC}_{\text{leader}}), \BS{B_{\text{own}}}{digest})=0$. As for line~\ref{alg: change ELSS 2}, it is not possible to have two groups of blocks each of size at least $f+1$ including different digests at the end of any slot because all awake correct nodes always generate the same digests. Thus, when node $\eta$ will call $\SwitchChain()$ (see Alg.~\ref{alg: chain}), it will not pass the if-condition (see line~\ref{alg: adopt CC: SS safety}) as $I_{\text{ELSS}}=0$. Consequently, it will not switch the backbone chain after the state update phase of round $\langle s+1,1\rangle$.

\textit{Second case:} If $\eta$ was slot-$s$ asleep, then at the beginning of slot $s+1$,  $\eta$ calls the procedure $\WakeUpChain()$ (see Alg.~\ref{alg: chain}) and adopts the same digest as the majority of nodes generated at the end of slot $s$. 

In both cases, $\eta$ adopts the same digest as the majority of nodes produced at the end of slot $s$. This completes the proof of the inductive step and the proof of the safety property.

The liveness property follows directly from the synchronous nature of the SS model and the fact that every block sent by a correct node is received and added to a local DAG by any other correct node at the next round. 
\end{proof} 
\subsection{Proof of Theorem~\ref{th: eventually syncrony}} \label{sec: proof for finality for ELSS}
\subsubsection{Safety}
We start with a simple observation concerning digest certificates. We will make use of the property that $2f+1$ out of $3f+1$ nodes are correct.
\begin{lemma}\label{lem: qc are same}
  For any two digest certificates $A$ and $B$ with $\BS{A}{slot}=\BS{B}{slot}$, the digests certified by $A$ and $B$ are the same.
\end{lemma}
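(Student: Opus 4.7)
The plan is to combine standard quorum intersection with the invariant that a correct node adopts a unique slot-$(t-2)$ digest within any single slot $t$. Let $t = \BS{A}{slot} = \BS{B}{slot}$ and let $\sigma_A = \Commit(A)$, $\sigma_B = \Commit(B)$ be the two certified digests. By Def.~\ref{def: quorum certificate}, $\BS{\sigma_A}{slot} = \BS{\sigma_B}{slot} = t-2$, and there exist quorums $S_A \subseteq \cone(A)$, $S_B \subseteq \cone(B)$ consisting entirely of slot-$t$ blocks that ``include'' $\sigma_A$ and $\sigma_B$, respectively (either directly through $\BS{B}{digest}$, or through $\BeforeCommit(\BS{B}{digest})$ as per the parenthetical in the definition). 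Since $|\{\BS{B}{node}: B\in S_A\}|,\;|\{\BS{B}{node}: B\in S_B\}| \ge 2f+1$ and $n=3f+1$ under the hypothesis of Th.~\ref{th: eventually syncrony}, inclusion-exclusion yields that these two identifier sets intersect in at least $2(2f+1)-(3f+1)=f+1$ nodes. At most $f$ of them are Byzantine, so I can fix a correct node $\eta$ in the intersection contributing blocks $B_A\in S_A$ and $B_B \in S_B$, both issued at slot $t$.

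Next I would appeal to the protocol's bookkeeping of adopted digests. In Alg.~\ref{alg: consensus} the procedures $\WakeUpChain()$ and $\SwitchChain()$ can fire only during the state update of round $\langle t, 1\rangle$, and $\UpdateChain()$ is invoked only at round $\langle t, f+2\rangle$. Hence between these two state updates the correct node $\eta$ is locked on a single adopted digest $\sigma^{(\eta)}$ with $\BS{\sigma^{(\eta)}}{slot} = t-2$. The block $\eta$ creates at each round $\langle t, i\rangle$ with $1 \le i \le f+1$ therefore carries $\BS{B}{digest} = \sigma^{(\eta)}$, while its single block at round $\langle t, f+2\rangle$ carries the freshly computed $\BS{B}{digest} = \sigma^{(\eta)}_{t-1}$, whose slot index is $t-1$ and which, by construction of $\UpdateChain()$, satisfies $\BeforeCommit(\sigma^{(\eta)}_{t-1}) = \sigma^{(\eta)}$. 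Consequently, $\sigma^{(\eta)}$ is the only digest of slot index $t-2$ that any slot-$t$ block of $\eta$ can ``include'' in the sense of Def.~\ref{def: quorum certificate}.

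Combining the two observations finishes the argument: since $B_A$ includes $\sigma_A$ and $B_B$ includes $\sigma_B$, and both digests have slot index $t-2$, we conclude $\sigma_A = \sigma^{(\eta)} = \sigma_B$. I expect the main obstacle to be the second paragraph, namely pinning down precisely that a correct $\eta$ cannot legally emit two slot-$t$ blocks whose digest fields correspond to two different slot-$(t-2)$ digests; this requires a careful reading of how $\SwitchChain()$, $\UpdateChain()$ and $\CreateBlock()$ are sequenced inside the state update phase, together with the dual meaning of ``includes the same digest'' hinted at by the note in Def.~\ref{def: quorum certificate}. Once that invariant is nailed down, the rest is the routine $n=3f+1$ quorum-intersection computation.
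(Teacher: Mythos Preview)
Your proposal is correct and follows essentially the same route as the paper: intersect the two quorums to obtain a common correct node, and then use that a correct node adopts a single slot-$(t-2)$ digest throughout slot $t$ to force $\sigma_A=\sigma_B$. The paper's proof is terser (it simply asserts that the quorum blocks sit at rounds $\langle s,1\rangle,\ldots,\langle s,f+1\rangle$ and that a common correct contributor settles the matter), whereas you additionally unpack the round-$\langle t,f+2\rangle$ case via $\BeforeCommit$ and justify the ``one digest per slot'' invariant from the sequencing of $\SwitchChain()$, $\UpdateChain()$, and $\CreateBlock()$; this extra care is sound but not a different idea.
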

\begin{proof}
Let $s=\BS{A}{slot}=\BS{B}{slot}$. By Def.~\ref{def: quorum certificate}, a quorum of blocks created at rounds $\langle s, 1 \rangle, \ldots,\langle s, f+1 \rangle $ is needed for a digest certificate. Thus, at least one correct node created a block (or blocks) at slot $s$ that is (are) included in both $\cone(A)$ and $\cone(B)$. This implies that $A$ and $B$ certify the same digest.
\end{proof}

\begin{lemma}\label{lem: finalized locked}
If a correct node has finalized a slot digest $\sigma$ at slot $s$, then at least $f+1$ correct nodes have created blocks at slot $s$ serving as digest certificates for  $\sigma$.
\end{lemma}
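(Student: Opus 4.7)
The plan is to reduce the statement to a direct counting argument using the definitions of finality and quorum. First, I would unpack Definition~\ref{def: final commitment}: since a correct node has finalized the digest $\sigma$, its (valid) local DAG must contain a quorum of digest certificates for $\sigma$. By the definition of a quorum, this means there are DCs for $\sigma$ issued by at least $2f+1$ distinct nodes. Second, by Definition~\ref{def: quorum certificate}, every DC for $\sigma$ is a block whose slot equals $\BS{\sigma}{slot}+2$, which is precisely the slot $s$ referenced in the statement (there is no other slot at which a DC for $\sigma$ can live). Hence all $2f+1$ distinct DC creators issued their DCs at slot $s$.

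The final step is the standard Byzantine pigeonhole bound. Since at most $f$ of the $3f+1$ nodes are Byzantine, at least $(2f+1)-f = f+1$ of the DC-creating nodes are correct, giving the claim. The only conceptual point worth checking is that the property of being a DC is determined by the block's past cone (i.e., by its hash references), so if a correct node's block serves as a DC from the finalizing node's viewpoint, it also serves as a DC intrinsically — there is no observer-dependence. I do not anticipate any real obstacle: the lemma is essentially a one-line consequence of quorum intersection with the Byzantine fraction bound $f < n - f$, applied to the $2f+1$ creators supplied by Definition~\ref{def: final commitment}.
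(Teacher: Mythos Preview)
Your proposal is correct and follows essentially the same argument as the paper: unpack Definition~\ref{def: final commitment} to get a quorum of $2f+1$ distinct DC creators, then subtract the at most $f$ Byzantine nodes. Your additional observation that Definition~\ref{def: quorum certificate} pins every DC for $\sigma$ to slot $\BS{\sigma}{slot}+2=s$ is a nice clarification that the paper's proof leaves implicit.
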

\begin{proof}
    The statement follows directly from Def.~\ref{def: final commitment}. Indeed, the finalization of digest $\sigma$ could happen only when the correct node has in its DAG a quorum, i.e., $2f+1$, of blocks such that each block from this quorum is a DC for $\sigma$. Among these $2f+1$ blocks, $f+1$ were created by correct nodes.
\end{proof}

\begin{lemma}[Safety of finality]
Suppose a correct node has finalized a slot digest $\sigma$. Then no other correct node finalizes a slot digest conflicting with $\sigma$.
\end{lemma}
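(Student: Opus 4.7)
The plan is to proceed by contradiction: assume that a correct node $\eta$ has finalized $\sigma$ and another correct node $\mu$ has finalized some $\sigma'$ with $\sigma$ and $\sigma'$ conflicting. By Lemma~\ref{lem: finalized locked} applied to $\sigma$, there is a set $N_\sigma$ of at least $f+1$ correct nodes that, by the end of slot $\BS{\sigma}{slot}+2$, hold a digest certificate for $\sigma$. I split on the relative position of $\BS{\sigma}{slot}$ and $\BS{\sigma'}{slot}$. If they are equal, Lemma~\ref{lem: qc are same} applied at slot $\BS{\sigma}{slot}+2$ immediately forces $\sigma=\sigma'$, a contradiction. Otherwise, without loss of generality assume $\BS{\sigma}{slot}<\BS{\sigma'}{slot}$, and reduce safety to an inductive claim about \emph{all} digest certificates appearing at later slots.

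The core inductive claim is: for every slot $s\ge \BS{\sigma}{slot}+2$, every digest certificate at slot $s$ (held by any node) certifies a digest that is consistent with $\sigma$, i.e.\ $\sigma\in\chain(\theta)$ for the certified $\theta$. The base case $s=\BS{\sigma}{slot}+2$ is Lemma~\ref{lem: qc are same}. To handle the inductive step, I first prove the invariant that every node in $N_\sigma$ remains, throughout all rounds of slots $\ge \BS{\sigma}{slot}+2$, on a backbone chain that contains $\sigma$. Since the node's most recent digest certificate is at slot $\BS{\sigma}{slot}+2$ (or later), tracing through $\SwitchChain()$ shows that in both branches a switch is permitted only when the leader's digest certificate is created no earlier than the node's. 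By the inductive hypothesis, any such leader certificate certifies a digest consistent with $\sigma$, so switching preserves the invariant; in the remaining subcase the node does not switch. One also verifies the update of $I_{\text{ELSS}}$ does not break the argument, since it can only enable more switches, each of which is still guarded by the certificate-age comparison.

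Given the invariant, the inductive step is completed by a quorum-intersection counting argument. Any digest certificate at slot $s$ involves a quorum of $2f+1$ distinct nodes whose blocks at slot $s$ all include the same digest $\theta$ (with $\BS{\theta}{slot}=s-2$). Since at most $f$ nodes are Byzantine and at most $(2f+1)-(f+1)=f$ correct nodes lie outside $N_\sigma$, the quorum must contain at least one block issued by some $\nu\in N_\sigma$; this block carries $\nu$'s currently adopted digest, which by the invariant is consistent with $\sigma$, so $\theta$ is consistent with $\sigma$. Applying the inductive claim at $s=\BS{\sigma'}{slot}+2$ to $\mu$'s quorum of digest certificates for $\sigma'$ gives that $\sigma'$ is consistent with $\sigma$, contradicting the assumption that they conflict.

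The main obstacle is establishing the invariant that the locked set $N_\sigma$ never drifts off chains containing $\sigma$. Unlike the quorum-counting step, which is a routine application of $n=3f+1$ with $|N_\sigma|\ge f+1$, the invariant requires a careful case analysis of every branch of $\SwitchChain()$, $\CheckELSS()$, and $\WakeUpChain()$; the subtlety is that the inductive hypothesis about certificates at \emph{earlier} slots is exactly what is needed to rule out the leader dragging an $N_\sigma$ node onto a conflicting chain, so the invariant and the inductive claim have to be proved in a tightly interleaved fashion.
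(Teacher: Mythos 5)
Your argument is essentially the paper's proof: the same ingredients (Lem.~\ref{lem: finalized locked} to obtain the $f+1$ locked correct nodes, Lem.~\ref{lem: qc are same} for the same-slot case, the $\SwitchChain()$ guard to keep locked nodes off conflicting chains, and quorum intersection to force any later conflicting digest certificate to involve a locked node), with the paper's minimal-counterexample over slots repackaged as your strong induction, which is logically the same argument at the same level of rigor. One small imprecision: in the $2N_{\textup{same}}\le N_{\textup{total}}$ branch a switch is also permitted when the leader's digest is consistent with $\Commit(\textup{DC}_{\textup{own}})$ irrespective of certificate slots, but since that case moves the node onto a chain extending its certified digest (hence still containing $\sigma$), it does not disturb your invariant, and the paper's own proof elides this case at the same granularity.
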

\begin{proof}
Toward a contradiction, assume that a conflicting digest $\theta$ is final for another correct node. Without loss of generality, assume $\sigma$ gets final at an earlier round, say slot $s$, by a given correct node.  By Lem.~\ref{lem: finalized locked}, $f+1$ correct nodes created digest certificates for $\sigma$ at slot $s$. By Lem.~\ref{lem: qc are same}, all digest certificates at slot $s$ certify the same digest. By the chain switching rule (procedure $\SwitchChain{}$ in Alg.~\ref{alg: chain}), no node of the $f+1$ correct nodes, created DCs, could switch its backbone chain to a one conflicting with $\sigma$ unless it received a digest certificate issued at a slot greater than $s$.

Let $B$ be a digest certificate created at a slot $r$ such that $\Commit(B)$ is conflicting with $\sigma$ and the slot $r>s$ is minimal among all such digest certificates. Such a certificate should exist since the conflicting digest $\theta$ gets final at some slot.  At least one of the said $f+1$ correct nodes had to contribute a block to form a digest certificate $B$. This means that this correct node has switched its chain at or before slot $r$ to the one conflicting with $\sigma$. But this could happen only if there would exist a digest certificate that conflicts with $\sigma$ and is issued before slot $r$. This contradicts the minimality of $r$.
\end{proof}

\subsubsection{Liveness}
\begin{lemma}\label{lem: tangles}
Suppose after round $GST$, all correct nodes adopt the same slot digest $\sigma_{s-2}$ for slot $s$. 
Then all correct nodes finalize the digest $\sigma_{s-2}$ at the state update phase of round $\langle s, 3\rangle$.
\end{lemma}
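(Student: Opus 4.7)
The plan is to trace the three rounds $\langle s,1\rangle$, $\langle s,2\rangle$, $\langle s,3\rangle$ of slot $s$ after $GST$ and show that by the end of the state update of round $\langle s,3\rangle$, every correct node's local DAG contains at least $2f+1$ digest certificates (DCs) for $\sigma_{s-2}$, whence $\sigma_{s-2}$ becomes final by Def.~\ref{def: final commitment}. The argument leans on lock-step synchrony after $GST$ (messages sent at round $r$ arrive at round $r+1$), on best-effort broadcast of full past cones, and on the fact that $\EqSet(\sigma_{s-2})$ is identical across all correct nodes that adopted $\sigma_{s-2}$.

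The key steps are as follows. First, in the send phase of $\langle s,1\rangle$ each of the $\ge 2f+1$ correct nodes issues a block $A_\eta$ carrying digest $\sigma_{s-2}$; at the receive phase of $\langle s,2\rangle$ every correct node obtains all such $A_\eta$ together with their cones, and the state update of $\langle s,2\rangle$ adds them to the local DAG. Here U1 is trivial (the only slot-$s$ block in $\cone(A_\eta)$ is $A_\eta$ itself, which is correct), U2 follows from validity of correct blocks, and U3 at $i-1=1$ holds since $A_\eta$ alone already reaches every slot-$(s-1)$-or-earlier block in its cone. Second, in the send phase of $\langle s,2\rangle$ each correct node $\nu$ builds a block $B'_\nu$ referencing every tip of its updated DAG, which by the previous step includes all $\ge 2f+1$ blocks $A_\eta$; hence $\cone(B'_\nu)$ contains a quorum of slot-$s$ blocks all bearing $\sigma_{s-2}$, so $B'_\nu$ is a DC for $\sigma_{s-2}$ by Def.~\ref{def: quorum certificate}. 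Third, at the receive phase of $\langle s,3\rangle$ every correct node receives every $B'_\nu$ from every correct $\nu$, and it remains to verify that $\UpdateDAG$ successfully inserts each such $B'_\nu$ into the local DAG at the state update of $\langle s,3\rangle$.

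The main obstacle is condition U3 at this last step, which with $i=3$ demands $\ReachNumber(C,B'_\nu) \ge 2$ for every non-committed block $C \in \cone(B'_\nu) \setminus \D$ at slot $s-1$ or earlier. I would handle it by case analysis on how $C$ entered $\cone(B'_\nu)$. If $C$ was already in $\nu$'s DAG at the moment $A_\nu$ was created, then $A_\nu$ (which references every tip) transitively reaches $C$, so $\cone(A_\nu) \subseteq \D$ in the receiving node's view by the previous paragraph and thus $C \in \D$, contradicting $C \notin \D$. Otherwise $C$ entered $\nu$'s DAG at the state update of $\langle s,2\rangle$ through the cone of some round-1 block $E$; if $E$ is by a correct node, the same argument places $C$ in $\D$; if $E$ is by a Byzantine node then $E$ is itself a slot-$s$ block of $\cone(B'_\nu)$ whose author is distinct from $\nu$ and which reaches $C$, so $B'_\nu$ and $E$ contribute two different nodes to the reachable count, giving $\ReachNumber(C,B'_\nu) \ge 2$. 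Condition U1 then follows because $\EqSet(\sigma_{s-2})$ agrees across all correct nodes adopting $\sigma_{s-2}$ and $\nu$ itself would have filtered out such equivocators when processing round-1 blocks, and U2 is inherited from the validity of $\nu$'s DAG. Consequently every correct node ends the state update of $\langle s,3\rangle$ with $\ge 2f+1$ DCs for $\sigma_{s-2}$ in its DAG, and the call to $\FinalizeSlots$ finalizes $\sigma_{s-2}$.
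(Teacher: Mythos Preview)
Your argument is correct and follows the same three-round trace as the paper's proof: round-1 blocks of correct nodes propagate by round 2, round-2 blocks then reference a quorum carrying $\sigma_{s-2}$ and hence are DCs, and these DCs propagate by round 3 so that $\FinalizeSlots$ finalizes $\sigma_{s-2}$. The only difference is one of packaging: the paper dispatches the propagation step in a single sentence by invoking ``the arguments of Lemma~\ref{lem: same sc}'' (which already establishes that a correct node's block at round $i$ enters every other correct node's DAG at round $i+1$, with the U3 bound increasing by one thanks to the sender's own contribution), whereas you re-derive that propagation explicitly and perform a separate case analysis on U3 at round 3. Your case split (tracking whether an uncommitted $C$ entered $\nu$'s DAG via a correct or a Byzantine round-1 block) is sound and arrives at the same $\ReachNumber \ge 2$ conclusion, but it is strictly more work than appealing to the already-proved Lemma~\ref{lem: same sc}.
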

\begin{proof}
Every block created by one of the correct nodes at round $i$ is added to the DAG maintained by any other correct node at round $i+1$  by the arguments of Lem.~\ref{lem: same sc}. 
Correct nodes adopt the digest $\sigma_{s-2}$ for the whole slot $s$ and reference all previous blocks by the correct nodes. Hence, the $2f+1$ blocks created by correct nodes at round $\langle s, 2\rangle$ form a quorum of digest certificates for $\sigma_{s-2}$. By Def.~\ref{def: final commitment}, all correct nodes will finalize $\sigma_{s-2}$ at the state update phase of round $\langle s, 3\rangle$.
\end{proof}

Now we proceed with a statement showing when all correct node switch to the same backbone chain after $GST$.
\begin{lemma}[Liveness of finality]\label{lem: same scc}
After $GST$, let $s_{\textup{same}}$ denote the first slot when all correct nodes adopt the same slot digest. Then $s_{\textup{same}}-GST$ is stochastically dominated by a random variable which is a sum of two independent random variables having geometric distribution $\mathrm{Geo}(1/n)$. In particular, the expected number of slots it takes until all correct nodes follow the same backbone chain satisfies $\mathbb{E}(s_{\textup{same}})\le GST+2n$.
\end{lemma}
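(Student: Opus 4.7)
The plan is to couple $s_{\textup{same}}-GST$ with the sum $\xi_1+\xi_2$ of two independent waiting times, each corresponding to the election of a specific correct leader after $GST$. By Def.~\ref{def: leader function} and the random-common-coin model, a fixed correct node is elected leader at each slot $s\ge GST$ with probability at least $1/n$, independently across slots; so fixing any correct node $j$, the first post-$GST$ slot at which $j$ is leader is stochastically dominated by $\mathrm{Geo}(1/n)$, and the gap to the next such slot is an independent copy. Call these waiting times $\xi_1$ and $\xi_2$, and set $s_1=GST+\xi_1$ and $s_2=s_1+\xi_2$.

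First I would analyze the first correct-leader slot $s_1$. By lock-step synchrony after $GST$ together with $\UpdateDAG()$ (in particular condition U3 becoming satisfied for all pending blocks), the leader's local DAG at round $\langle s_1,1\rangle$ contains every block broadcast by a correct node up through round $\langle s_1-1,f+2\rangle$; in particular, it holds every digest certificate ever produced by a correct node. Hence the condition ``node's DC not later than leader's DC'' in $\SwitchChain()$ is satisfied for every correct node. Every correct node in Case~1 ($2N_{\textup{same}}\le N_{\textup{total}}$) therefore switches to the leader's backbone chain, and the only correct nodes that may fail to switch are those in Case~2 for whom $I_{\textup{ELSS}}$ is still $0$. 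By the safety of finality proved in the preceding subsection, any digest finalized previously by a correct node is already consistent with the leader's chain, so the finalization exemption cannot strand a correct node on a divergent chain.

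Next I would show that after slot $s_1$ either all correct nodes already share a common backbone chain (giving $s_{\textup{same}}\le s_1$), or $I_{\textup{ELSS}}$ has been flipped to $1$ by $\CheckELSS()$: the disagreement between the Case~1 switchers (on the leader's chain) and the Case~2 non-switchers produces two digests each supported by at least $f+1$ nodes, which is exactly the trigger in line~\ref{alg: change ELSS 2}. Consequently, at the next correct-leader slot $s_2$, both preconditions of Case~2 hold for every correct node — the leader again holds the most recent DC, and $I_{\textup{ELSS}}=1$ — so every correct node adopts the leader's chain and enters slot $s_2+1$ with the same digest. Lem.~\ref{lem: same sc} then preserves agreement from that point on, giving $s_{\textup{same}}\le s_2=GST+\xi_1+\xi_2$; the expectation bound $\mathbb{E}[s_{\textup{same}}]\le GST+2n$ follows from $\mathbb{E}[\mathrm{Geo}(1/n)]=n$.

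The main obstacle will be the detailed case analysis of the three switching exemptions (majority, finalization, DC-timestamp) in combination with the monotone flag $I_{\textup{ELSS}}$, and in particular ruling out pathological joint configurations in which a minority of correct nodes repeatedly lands in Case~2 with $I_{\textup{ELSS}}$ still $0$, so that a single correct-leader slot does not suffice — this is precisely why two $\mathrm{Geo}(1/n)$ waiting times are needed rather than one.
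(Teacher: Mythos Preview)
Your overall structure—two geometric waiting times, with the adversary's one-shot ability to stall convergence forcing the second wait—matches the paper's. But there is a concrete gap in your first step. You claim that after $GST$ the correct leader's local DAG ``contains every block broadcast by a correct node'' and hence ``holds every digest certificate ever produced by a correct node.'' This does not follow: $\UpdateDAG()$ only processes a received block $B$ when $\BS{B}{digest}$ equals the node's own current digest (the filter at line~\ref{alg: requireSameCommit} of Alg.~\ref{alg: updateDAG}); condition U3 is irrelevant here. Blocks from correct nodes sitting on a different backbone chain carry a different digest, so they stay in the leader's $\buffer$ but never enter its DAG, and $\cone(B_{\textup{leader}})$ need not contain their latest DC. Consequently the premise $\text{DC}_{\textup{leader}}.\mathrm{slot}\ge\text{DC}_{\textup{own}}.\mathrm{slot}$ can fail for some correct node, and an \emph{arbitrary} correct leader need not cause convergence.

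The paper sidesteps this not by arguing that every correct leader works, but by identifying at each slot a \emph{specific} proper leader: it takes $\mu$ to be a correct node whose own DC has maximal slot number among all correct nodes, so the DC-comparison condition holds by construction whenever $\mu$ is leader. Your dichotomy ``either all converge at $s_1$ or $I_{\textup{ELSS}}$ flips'' is also incomplete: when only a \emph{single} majority group refuses to switch (with $I_{\textup{ELSS}}=0$), there is no second group of size $\ge f{+}1$ to trigger $\CheckELSS()$. The paper handles this residual case by designating a node $\eta$ from that majority group as the proper leader instead and invoking the non-conflict disjunct in line~\ref{alg: sneak to non-conflic}. Either way the proper leader—possibly different at each slot—is elected with probability $\ge 1/n$, which is what yields the $\mathrm{Geo}(1/n)$ domination; the second geometric is for the adversary's one-time option of manufacturing two majority groups. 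You should rework the argument around this slot-by-slot choice of the proper leader rather than a fixed node $j$.
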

\begin{proof}
Let $s+1$ be an arbitrary slot after $GST$ and before moment $s_{\text{same}}$. 
Let $\mu$ be a correct node with a digest certificate generated at the highest slot number among all correct nodes.
Note that after $GST$, the function $\Leader()$ returns one of the nodes (not $\bot$); see Def.~\ref{def: leader function}. With probability of least $1/n$, node $\mu$ will be chosen as a leader based on the value of a common random coin for slot $s+1$. For every other correct node $\eta$, it holds DC$_{\text{leader}}.\mathrm{slot}\ge \text{DC}_{\text{own}}.\mathrm{slot}$. 
Let us check all potential scenarios when both $\eta$ and $\mu$ will not end up on the same backbone chain after running at $\langle s+1, 1 \rangle$ procedure $\SwitchChain()$, see Alg.~\ref{alg: chain}:
\begin{enumerate}
    \item node $\eta$ has finalized a digest at the previous slot, i.e., $s_{\text{final}}=s-2$ (see line~\ref{alg: finalized before} in Alg.~\ref{alg: chain}); 
    \item node $\eta$ has $s_{\text{final}}\neq s-2$ and $2N_{\text{same}} \ge N_{\text{total}}+1$ (see line~\ref{alg: adopt CC: SS safety} in Alg.~\ref{alg: chain}).
\end{enumerate}
In the first case, $\eta$ and $\mu$ have digest certificates created at the same slot $s$ and certifying the same digest $\sigma_{s-2}$ by Lem.~\ref{lem: qc are same}. In addition, they generated the same digests for slot $s-1$ by Lem.~\ref{lem: same sc}, i.e., $\eta$ will adopt the same digest for slot $s+1$ as $\mu$. 

In the second case, after $GST$, there could be some \textit{majority} groups of correct nodes that perceive $2N_{\text{same}}\ge N_{\text{total}}+1$, where correct nodes within each group have the same digest. Note that in this case, $N_{\text{total}}\ge 2f+1$ since blocks from all correct nodes will be received after $GST$. Therefore, $N_{\text{same}}$ has to be at least $f+1$. If there are two such majority groups, then all correct nodes will learn about that at the slot $s+1$ and change the indicator $I_{\text{ELSS}}$ to $1$ when running $\SwitchChain()$ at the next slot $s+2$ (see line~\ref{alg: checkElss}). Thus, we can assume that there is only one majority group of correct nodes (including $\eta$) that perceive $2N_{\text{same}} \ge N_{\text{total}}+1$ and $\eta$ has $I_{\text{ELSS}}=0$ (the latter condition does not allow $\eta$ to switch its chain to the one of the leader $\mu$). However, this also means that digests $\Commit(\text{DC}_{\text{leader}})$ and $\BS{B_{\text{own}}}{digest}$ are not conflicting (see line~\ref{alg: change ELSS 1}). In such a case, we note that with probability $1/n$, node $\eta$ could be selected as a leader for slot $s+1$ and then $\mu$ (as all other correct nodes outside the majority group) would switch its backbone chain as all required if-conditions in line~\ref{alg: sneak to non-conflic} are satisfied.

It remains to compute the expected number of slots after $GST$, which is sufficient for all correct nodes to switch to one chain. With a probability of at least $1/n$, a proper correct node will be selected as a leader, resulting in switching all correct nodes to one chain. However, the adversary controlling the Byzantine nodes has a one-time opportunity to create at least two majority groups that will not allow all correct nodes to adopt one chain. In the worst case, the adversary could use this one-time option only when a proper correct leader is chosen. In this case, all correct nodes set their indicators $I_{\text{ELSS}}$ to $1$ at the end of the slot and switch their backbone chains to the one of a proper correct leader next time. Thereby, the moment $s_{\text{same}}-GST$ is stochastically dominated by a sum of two independent random variables having a geometric distribution with probability $1/n$.
\end{proof}

\begin{theorem}[Liveness of finality]\label{th: liveness of finality}
Let $s_{\textup{same}}>GST$ be defined as in \textup{Lem.~\ref{lem: same scc}}. Then after finishing any slot $s\ge s_{\textup{same}}$, all correct nodes finalize the digest for slot $s-2$.
\end{theorem}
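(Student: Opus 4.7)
The plan is to induct on the slot index $s$, starting from the base case $s=s_{\text{same}}$, and show at each step that (a) all correct nodes finalize $\sigma_{s-2}$ during slot $s$, and (b) they all still share the same digest $\sigma_{s-1}$ at the start of slot $s+1$ so the induction can continue.

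For the base case $s = s_{\text{same}}$, by the definition of $s_{\text{same}}$ given in \textup{Lem.~\ref{lem: same scc}}, all correct nodes have adopted the same digest $\sigma_{s-2}$ by the end of the state update phase of round $\langle s,1\rangle$. Applying \textup{Lem.~\ref{lem: tangles}} directly yields that every correct node finalizes $\sigma_{s-2}$ at round $\langle s,3\rangle$, proving part (a). Moreover, by \textup{Lem.~\ref{lem: same sc}}, since all correct nodes entered slot $s$ with the common digest $\sigma_{s-2}$, they all generate the identical digest $\sigma_{s-1}$ at the end of round $\langle s, f+2\rangle$ when executing $\UpdateChain()$.

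For the inductive step, assume that at the start of slot $s \geq s_{\text{same}}$ (after the state update phase of round $\langle s, 1 \rangle$) every correct node has adopted the same digest $\sigma_{s-2}$. Part (a) and the fact that every correct node ends slot $s$ with the same digest $\sigma_{s-1}$ are obtained as in the base case via \textup{Lem.~\ref{lem: tangles}} and \textup{Lem.~\ref{lem: same sc}}. The key point to check is that at round $\langle s+1, 1\rangle$, when each correct node calls $\SwitchChain()$, it does not drop $\sigma_{s-1}$. Because every correct node just finalized $\sigma_{s-2}$ during slot $s$, each has $s_{\text{final}} = s-2$ upon entering slot $s+1$; by the chain-switching rule in \textup{Sec.~\ref{sec: chain-switching rule}}, the branch at line~\ref{alg: finalized before} of \textup{Alg.~\ref{alg: chain}} triggers (finalization occurred at the immediately preceding slot), so the node does not switch its backbone chain. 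Consequently all correct nodes enter slot $s+1$ with the common digest $\sigma_{s-1}$, closing the induction.

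The main obstacle, and the only nontrivial part beyond invoking the two earlier lemmas, is to verify that none of the three escape conditions in $\SwitchChain()$ can override the ``just-finalized'' condition. One must confirm that the adversary cannot force a correct node into switching via a Byzantine leader by pointing to a later digest certificate conflicting with $\sigma_{s-2}$ or $\sigma_{s-1}$: by the safety portion already established in \textup{Sec.~\ref{sec: proof for finality for ELSS}} (in particular \textup{Lem.~\ref{lem: qc are same}} and the safety-of-finality lemma), no such conflicting digest certificate can exist once $\sigma_{s-2}$ has been finalized by any correct node. Hence the branch condition for not switching is robust, the inductive step goes through, and the theorem follows.
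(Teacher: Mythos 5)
Your proof is correct and follows essentially the same route as the paper: the paper's argument is the same chain of Lem.~\ref{lem: same scc} $\to$ Lem.~\ref{lem: same sc} $\to$ Lem.~\ref{lem: tangles}, with the inductive continuation ("the same arguments apply for any slot $s>s_{\textup{same}}$") left implicit, whereas you make the induction and the non-switching step explicit. Your observation that $s_{\text{final}}=s-2$ makes the guard at line~\ref{alg: finalized before} of Alg.~\ref{alg: chain} fail (so the entire switching block is skipped, independent of any adversarial digest certificates) already suffices; the extra appeal to the safety lemmas is harmless but unnecessary.
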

\begin{proof}
By Lem.~\ref{lem: same scc}, all correct nodes have adopted the same backbone chain before starting slot $s_{\text{same}}$. Then they produce the same digest for slot $s_{\text{same}}-1$ by Lem.~\ref{lem: same sc} and finalize the digest for slot $s_{\text{same}}-2$ by Lem.~\ref{lem: tangles}. The same arguments can be applied for any other slot $s> s_{\text{same}}$.
\end{proof}

\subsection{Proof of Theorem \ref{th: confirmation}}
\subsubsection{Liveness}
First, we show the liveness property of fast-path confirmation. By Lem.~\ref{lem: same scc}, at some moment $s_{\text{same}}$ after $GST$ all correct nodes end up on the same backbone chain. Then any cautious honest transaction $tx$, included in a block $B$ after slot $s\ge s_{\text{same}}$, gets confirmed by $\ConfirmTransactions()$ at the state update phase of round $\BS{B}{time}+3$. Indeed all correct nodes stay on the same backbone chain by Lem.~\ref{lem: same sc} and add blocks created by correct nodes at one round at the state update phase of the next round. Since the transaction is honest and cautious, the blocks of correct nodes at round $\BS{B}{time}+2$ serve as a quorum of transaction certificates for $tx$ in $B$, resulting in confirmation of $tx$ in procedure $\ConfirmTransactions()$.

Second, we prove the liveness property of the consensus-path confirmation. Any transaction included in a block $B$ will be processed in the procedure \\ $\FinalizeTransactions()$ (see Alg.~\ref{alg: ledger}), when the finality time of slot $\BS{B}{slot}$ is determined. Due to Lem.~\ref{lem: same scc}, the finality time progresses after $GST$, specifically, after slot $s_{\text{same}}$. If a cautious honest transaction has a transaction certificate in the DAG (restricted to the respected digest), then it will be added to the final ledger in the first for-loop of $\FinalizeTransactions()$; if not, it will happen in the second for-loop. For a cautious transaction, its inputs are already in $\LOG$; for an honest transaction, no conflicting transactions could appear in $\LOG$ before processing this transaction.

\subsubsection{Consistency}
Let $tx$ be a transaction in a block $B$ such that $tx$ in $B$ gets confirmed through the fast path by a correct node. Hence, there exists a quorum of transaction certificates for $tx$ in $B$, where each transaction certificate is created at slot $\BS{B}{slot}$ or $\BS{B}{slot}+1$. 

Next, we show that every correct node adds $tx$ to $\LOG$ through the consensus path. Let the finality time for $\BS{B}{slot}$ be $\tau$, i.e., $\FinalTime(\BS{B}{slot}) = \tau$, see Def.~\ref{def: final time}. After the moment when a correct node updates the finality time with $\tau$, this node will call $\FinalizeTransactions()$ at line~\ref{proc: finalizeTransactions} in Alg.~\ref{alg: procedures CreateBlock UpdateChain FinalizeSlot}. When processing transaction $tx$ in $B$ in the first for-loop (see line~\ref{alg: processed with transaction certificate}) in $\FinalizeTransactions()$, there will be at least one transaction certificate TC in $\D(\sigma_{\tau})$. Indeed, one correct node must contribute to both a quorum of transaction certificates for $tx$ in $B$ and a quorum of digest certificates when finalizing digest $\sigma_{\tau-2}$ at slot $\tau$. By definition of approvals (Def.~\ref{def: transaction approval}), the transactions creating inputs of $tx$ must be already in $\LOG$ and, thus, the if-condition at line~\ref{alg: inputs there} will be passed.
It remains to check that $tx$ is not conflicting with $\LOG$. 

Toward a contradiction, assume that a conflicting transaction $tx'$ is already in $\LOG$. First, recall that both $tx$ and $tx'$ can not have transaction certificates due to quorum intersection by at least one correct node. Thus $tx'$ had a chance to get confirmed only when it was processed with block $B'$  in the second for-loop (see line~\ref{alg: processed with total order}) in $\FinalizeTransactions()$ at some point before, i.e., when the partition of the DAG corresponding to one of the previous finality times $\tau'$ with $\tau'< \tau$ was processed. There should be at least one correct node that contributed to both the finalization of digest $\sigma_{\tau'-2}$ (which commits block $B'$ with $tx'$) at slot $\tau'$ and approval of $tx$ in block $B$. However, this could potentially happen only if $\BS{B}{slot}\le \tau'-1$ because of the definition of transaction approval (see Def.~\ref{def: transaction approval}). This means both slots $\BS{B}{slot}$ and $\BS{B}{slot} +1$ are at most $\tau'$, and there should be at least one correct node with a transaction certificate for $tx$ in $B$ and contributing to the finality of $\sigma_{\tau' -2}$ at slot $\tau'$. However, this means that $tx$ would get final when processing the first loop (see line~\ref{alg: processed with transaction certificate} in $\FinalizeTransactions()$) before $tx'$ in $B'$, i.e., $tx'$ can not be added to $\LOG$.

\subsubsection{Safety}
Due to quorum intersection by at least one correct node, two conflicting transactions can not have transaction certificates and can not get both to $\LOG$ through the fast path. When adding a transaction to $\LOG$ in procedure $\FinalizeTransactions()$,  we check a potential double-spend for the transaction in the current state of the ledger. 

It remains to check the impossibility of the remaining case when a transaction $tx$ is added to $\LOG$ through the fast path and at the time of updating the ledger, $tx$ is conflicting with some transaction $tx'\in\LOG$. Such a check is absent in procedure $\ConfirmTransactions()$. However, the same case was considered in the proof of consistency above, and its impossibility was already shown.

\section{Conclusion and discussion}
This paper has introduced Slipstream, a DAG-based consensus protocol with an integrated payment system. Slipstream provides two block orderings, which achieve ebb-and-flow properties~\cite{neu2021ebb}: the optimistic (or available) ordering is designed for the slot-sleepy model, where communication is synchronous, and each node is either awake or asleep during each slot; the final ordering is intended for the eventually lock-step synchronous model, where communication is initially asynchronous before GST and becomes lock-step synchronous after GST. Notably, the protocol achieves deterministic safety and liveness in a sleepy model, which was not previously demonstrated. However, this is achieved through stricter requirements in our sleepy model, including its lock-step nature and stable participation of awake nodes during $f+2$ rounds. The payment system allows for efficient UTXO transaction confirmation: under synchrony, transactions can be confirmed after 3 rounds and unconfirmed double-spends are resolved in a novel way that utilizes the final block ordering in the DAG structure. We have demonstrated how this technique can be applied to other payment systems that use a DAG for disseminating transactions.

\textbf{Latency:} The latency to commit a block to the optimistic and final orderings in the respective models is $\Theta(f)$ rounds. The high latency for the eventually lock-step synchronous model results from the high latency in the slot-sleepy model. It would be interesting to explore how to improve our proposed DAG-based protocol to allow committing blocks in a sleepy model after $O(1)$ rounds (on average). For instance, one could attempt to commit leader blocks for optimistic block ordering, where leaders are chosen based on VRF evaluations, as done in~\cite{momose2022constant}.

\textbf{Randomization:} Slipstream uses randomization in the ELSS model to enable all correct nodes to safely switch their backbone chains to the one proposed by the leader. We have shown that $O(n)$ slots are sufficient for this synchronization. It would be interesting to investigate how to integrate a leaderless chain-switching rule into Slipstream that does not use any randomization and improves the average number of attempts required for successful synchronization.

\textbf{Communication complexity: }The amortized communication complexity in Slipstream is shown to be $O(n^2)$. While it is unclear how to achieve $O(n)$ complexity in sleepy models, one could attempt to optimize this for the eventually lock-step synchronous model. A natural solution for this is employing error-correcting codes, as is done for Byzantine reliable broadcast primitives~\cite{cachin2005asynchronous,das2021asynchronous}. Transaction dissemination in blocks could be optimized by encoding them using Reed-Solomon codes, such that any $f+1$ chunks from the $n$ chunks of encoded transaction data are sufficient to reconstruct the original transaction data. Instead of sending all transaction data from past blocks, each node could send only one chunk corresponding to its index. In this way, it may be possible to achieve linear amortized communication complexity when the transaction data is large enough.

\section{Acknowledgement}
Mayank Raikwar has been supported by IOTA Ecosystem Development grant.

\newpage

\bibliographystyle{plainurl}
\bibliography{ref}
\appendix


    

    
\newpage
\section{Valid DAG}\label{sec: valid DAG}
The concept of a valid DAG is needed to capture the essential properties of a DAG that is maintained by a correct node during the execution of Slipstream in one of the models considered in the paper. The following three sections delve into different definitions of valid DAGs, whereas the last one shows why a correct node always maintains a valid DAG.
\subsection{Graph-valid DAG}
A DAG $\D=(V,E)$ is considered \textit{graph-valid} if it satisfies the following properties:
\begin{enumerate}
    \item the vertex set $V$ contains a unique vertex, denoted as $Genesis$, with no outgoing edges, i.e., which does not contain any hash references;
    \item for any vertex $B\in V$, the set $V$ contains all blocks that are referenced by $B$ in $\BS{B}{refs}$;
    \item for any two vertices $B$ and $C$ with $C\in \BS{B}{refs}$, the edge set $E$ contains the directed edge $(B,C)$.
\end{enumerate}
\subsection{Time-valid DAG}
 A DAG $\D$ is called \textit{time-valid} if it satisfies the following properties:
\begin{enumerate}
     \item for any two vertices $B,C\in \D$ with $C\in \BS{B}{refs}$, it holds $\BS{C}{time}<\BS{B}{time}$;
     \item there is only one vertex $Genesis\in \D$, that is created at slot $0$; specifically, $Genesis.time = \langle 0,f+2\rangle$.
\end{enumerate}
\subsection{Digest-valid DAG}\label{sec: digest correct}
A DAG $\D$ is called \textit{digest-valid} if it satisfies the following digest validity (DV) properties:
\begin{enumerate}
    \item the digest of genesis $\BS{Genesis}{digest} = 0$; \label{item: CV1}
    \item every block $B\in \D$  with $\BS{B}{round}< f+2$ contains a digest to slot $\BS{B}{slot}-2$, i.e., $\BS{B}{digest}\mathrm{.slot} = \BS{B}{slot}-2$. If $\BS{B}{round} = f+2$, $B$ contains digest to slot $\BS{B}{slot} -1$; \label{item: CV2}
    \item for any block $B \in \D$, $B\neq Genesis$ it holds
    \begin{enumerate}
        \item if $\BS{B}{round}\not\in\{ 1, f+2\}$, all referenced blocks $C\in \BS{B}{refs}$ must satisfy $\BS{C}{digest}=\BS{B}{digest}$;
        \item if $\BS{B}{round} = 1$,  all blocks in $\BS{B}{refs}$ can be divided into two groups $\mathrm{Ref}_1$, $|\mathrm{Ref}_1|\ge 1$, and $\mathrm{Ref}_2$, $|\mathrm{Ref}_2|\ge 0$,  such that for any $C\in \mathrm{Ref}_1$, $\BS{C}{digest}=\BS{B}{digest}$ and for every $F\in \mathrm{Ref}_2$ the digest field is the same (but different from $\BS{B}{digest}$);
        \item if $\BS{B}{round} = f+2$, all blocks $C\in \BS{B}{refs}$ referenced by $B$ must satisfy $\BS{C}{digest}=\BeforeCommit(\BS{B}{digest})$; see Def.~\ref{def: commitment};
    \end{enumerate} \label{item: CV3}
    \item a digest of every block $B\in\D$ with $\BS{B}{round}=f+2$ is computed correctly based on $\cone(B)$ according to Def.~\ref{def: commitment}. \label{item: CV4}
\end{enumerate}
\subsection{Valid DAG of correct nodes}
\begin{definition}[Valid DAG]\label{def: validDAG}
A DAG is called \textit{valid} if it is graph, time, and digest-valid. 
\end{definition}

\begin{lemma}\label{lem:DAGisValid}
For any given correct node, it holds that its local DAG $\D$ at the end of the state update phases is valid. 
\end{lemma}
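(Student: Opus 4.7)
The plan is to proceed by induction on the sequence of state update phases executed by the correct node. In the base case, $\D = \{Genesis\}$ is trivially graph-valid (unique sink $Genesis$ with empty reference list), time-valid (its timestamp is $\langle 0, f+2\rangle$), and digest-valid by DV1, since $\BS{Genesis}{digest}=0$.

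For the inductive step, I would first enumerate the mechanisms by which $\D$ can grow during a state update phase: (i) the procedure $\UpdateDAG()$, which inserts a past cone $\cone(B)$ of some buffered block; (ii) the procedures $\SwitchChain()$ and $\WakeUpChain()$, which merge in the past cone of a leader block or of a block carrying the majority digest; and (iii) the block $B_{\text{own}}$ produced by $\CreateBlock()$ at the very end of the state update phase, whose references all lie in the current $\D$. For (i) and (ii), the argument is short: $\UpdateDAG()$ only adds $\cone(B)$ when its condition U2 explicitly demands validity of $\cone(B)$ in the sense of Def.~\ref{def: validDAG}; a parallel check guards the merges in (ii). Graph-validity and time-validity then reduce to the observation that past cones are added atomically, so every reference target and the corresponding directed edge appear simultaneously, and the time-ordering of references is part of cone validation.

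The real content lies in digest-validity. DV1 is never perturbed since $Genesis$ is immutable. For DV2 and DV3, I would distinguish the three cases based on the round of a freshly added block $C$. For interior rounds $1 < \BS{C}{round} < f+2$, every reference must carry $\BS{C}{digest}$; this follows from U2 for foreign blocks and, for $B_{\text{own}}$, from the fact that a correct node only references blocks in $\D$ carrying its currently adopted digest. For $\BS{C}{round}=1$, the reference set must split into at most two digest groups, which mirrors exactly the construction of a round-1 $B_{\text{own}}$ that may reference both its own last block from the previous slot (old digest) and freshly acquired blocks following a chain switch (new digest). For $\BS{C}{round}=f+2$, references must carry $\BeforeCommit(\BS{C}{digest})$: this holds for $B_{\text{own}}$ because $\UpdateChain()$ is invoked immediately before $\CreateBlock()$, so the node's current digest becomes $\sigma_{s-1}$ while the blocks already in $\D$ still carry $\sigma_{s-2}$; for foreign blocks it is part of their cone validity. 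DV4, which pins down the exact value of a round-$f+2$ digest via the $\Hash$ recursion in Def.~\ref{def: commitment}, holds for $B_{\text{own}}$ by direct construction in $\UpdateChain()$ and for foreign blocks by U2.

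The main obstacle will be the bookkeeping around the first round of a new slot when $\SwitchChain()$ or $\WakeUpChain()$ fires: one must verify that the two-group structure of DV3(b) survives the simultaneous adoption of a new digest, the merging of the leader's cone (carrying the new digest), and the production of $B_{\text{own}}$, and further that the leader block itself correctly exhibits at most two distinct digests among its references, one of which matches the newly adopted chain. Once this transition case is nailed down, the remaining DV-conditions follow by routine case analysis on the round index of each newly inserted block, tracking the digest adopted by the correct node at the moment of the state update phase.
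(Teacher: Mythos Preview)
Your proposal is correct and follows essentially the same approach as the paper's own proof: an inductive case analysis over the procedures that grow $\D$ (namely $\UpdateDAG()$, $\SwitchChain()/\WakeUpChain()$, and $\CreateBlock()$), relying on the explicit $\IsValid$ guard for merged cones and on the correct node's own bookkeeping for $B_{\text{own}}$, with the round index driving the DV3 subcase. The paper's write-up is slightly more compact---it begins by noting that $\D$ at the end of a state update phase equals $\cone(B_{\text{own}})$, and then splits into four cases by $(i, \text{procedure})$ rather than by growth mechanism---but the substance and the identification of the round-$1$ chain-switch transition as the delicate point are the same.
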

\begin{proof}
Recall that the local DAG $\D$ at the end of the state update coincides with the past cone $\cone(B_{\text{own}})$ of the latest created block. This is straightforward to check for graph- and time-validity. In the remainder, we focus on digest-validity.

There are four different cases in the Alg.~\ref{alg: consensus} to check. The first case is when round $i \notin \{1,f+2\}$ and $\UpdateDAG()$ is used. Note that before adding $\cone(B)$ to the local DAG in line~\ref{alg: add cone}, we check the validity of $\cone(B)$, see Def.~\ref{def: validDAG}. It remains to verify that the union of two valid DAGs is again a valid DAG. Conditions DV.\ref{item: CV1}, DV.\ref{item: CV2}, and DV.\ref{item: CV4} follow also directly from the fact that $\cone(B)$ and $
\D$ do satisfy them. Condition DV.\ref{item: CV3} is satisfied since we require in line \ref{alg: requireSameCommit} that $B$ has the same digest as the local DAG. As the node is supposed to be honest and to follow the protocol, the new block $B_{\text{own}}$, created in line \ref{alg: bown created} in Alg.~\ref{alg: consensus}, only references blocks in $\D \cup \cone(B)$ which have the same digest and hence $\cone(B_{\text{own}})$ is valid.

The second case is where $i=1$ and the node wakes up. 
Similar to the arguments in the $\UpdateDAG()$, the validity properties are conserved by merging two valid DAGs. Note hereby that property DV.\ref{item: CV3} is true since we only accept blocks $B$ having the same digest. 

The third case, where $i=1$ and the node switches the chain, is treated analogously. Let us only note here that, for property DV.\ref{item: CV3}, we have to check that the new block $B_{\text{own}}$, satisfies  DV.\ref{item: CV3}.b. However, this condition is satisfied, as at most one block with a different digest is added. 

The fourth case is where $i=f+2$ and the backbone chain gets updated. This has no influence on graph and time validity; however, the new block $B_{\text{own}}$ will contain a different digest. For an honest node following procedure \UpdateChain, see line \ref{proc:UpdateChain} in Alg.~\ref{alg: procedures CreateBlock UpdateChain FinalizeSlot}, properties DV.\ref{item: CV2},  DV.\ref{item: CV3}, and DV.\ref{item: CV4} hold true.
\end{proof}


\newpage

\section{Algorithms}\label{sec: algorithms}

\begin{algorithm}[!bht]
\scriptsize
\caption{Procedure to update the DAG.}
\label{alg: updateDAG}
\BlankLine
\Loc{}{
$\D$ \tcp*{DAG maintained by the node}
$\buffer$ \tcp*{Buffer maintained by the node}
$\EqSet$ \tcp*{Set of equivocators known to the node}
$B_{\text{own}}$ \tcp*{Last block created by the node}
}
\Proc{$\UpdateDAG()$\label{proc: updatedDAG}}{
$\langle s, i\rangle \gets \Now()$ \;
$\langle s', i'\rangle \gets \BeforeTime(\langle s, i\rangle)$ \;
\If{$B_{\textup{own}}.\mathrm{time} \neq \langle s', i'\rangle$}{\Return}
$\sigma\gets \CurrentCommit()$ \tcp*{see Line~\ref{line:currencommit} of Alg.~\ref{alg: procedures CreateBlock UpdateChain FinalizeSlot}} 
\For{$ B\in \buffer$ \textup{s.t.} $\cone(B)\subset \buffer$ $\land$ $\BS{B}{time}=\langle s',i'\rangle$ $\land$ $\BS{B}{digest}=\sigma$ $\land$ $\BS{B}{node}\not\in \EqSet$}  { \label{alg: requireSameCommit}
$check\gets 1$  \;
\For{$ C\in \cone(B)$ \textup{s.t.} $\BS{C}{slot}=s$}{
\If{$\BS{C}{slot}\in \EqSet(\sigma)$\label{alg: check equivocators}}{
$check \gets 0$ \;
\textbf{break}
}
}
\For{$C\in \cone(B)\setminus \D$ \textup{s.t.} $\BS{C}{slot} \le s-1$ $\land$ $\sigma$ \textup{does not commit} $C$}{
\If{$\ReachNumber(C, B)< i-1$\label{alg: reachnumber}}{
$check \gets 0$ \;
\textbf{break}
}
}
\If{$check$ $\land$ $\IsValid(\cone(B))$}{
$\D \gets \D \cup \cone(B)$ \label{alg: add cone} \;
}
}
}
\tcp{The function outputs number of nodes that created a block in $\cone(B)$ at slot $\BS{B}{slot}$ from which one can reach $C$}
\Func{$\ReachNumber(C,B)$}{
$S\gets \{\}$ \tcp*{Set of nodes}
\For{$E\in \cone(B)$ \textup{s.t.} $\BS{E}{slot}=\BS{B}{slot}$}{
\If{$C\in \cone(E)$}{
$S\gets S \cup \{\BS{C}{node}\}$ \;
}
}
\Return $|S|$
}
\tcp{The function outputs the previous timestamp}
\Func{$\BeforeTime(\langle s,i\rangle)$}{
\eIf{$i>1$}{\Return $\langle s,i-1\rangle$}{\Return $\langle s-1,f+2\rangle$}
}

\tcp{The function verifies if a DAG is graph-, time-, and digest valid}
 \Func{\IsValid{$\mathcal{C}$}} { \label{line: IsValidDag}
\Return $\mathcal{C} \textup{ is valid}$ \tcp*{see Def.~\ref{def: validDAG}}

}
\end{algorithm}

\begin{algorithm}[!htbp]
\scriptsize
\caption{Procedures to create a new block, update a backbone chain, and finalize a digest.}
\label{alg: procedures CreateBlock UpdateChain FinalizeSlot}
\BlankLine

\Loc{}{
$\D\gets \{Genesis\}$ \tcp*{DAG maintained by the node}
$\chain_{\text{own}}\gets (0)$ \tcp*{backbone chain adapted by the node}
$\text{pk}, \text{sk}$ \tcp*{Public key and Private key of the node}
$s_{\text{final}}\gets 0$\tcp*{Slot index of the last final slot digest}
$s_{\text{pre}}\gets 0$ \tcp*{Slot index of the last final slot digest in $\D(s_{\text{final}})$, see Def.~\ref{def: final time}}
$B_{\text{own}}$ \tcp*{Last block created by the node}
$\Order_{\text{own}}, \Order_{\text{final}}$\tcp*{Optimistic and final orders of blocks}
}

\Func{\CreateBlock{} \label{fn: create block}}{
$B\gets \{\}$\;
$B.\mathrm{refs}\gets \textsc{Hash}(\textsc{Tips}())$ \tcp*{Hash references to all unreferenced blocks in $\D$}
$\BS{B}{txs}\gets \textsc{Payload}()$ \tcp*{Subset of transactions}
$\BS{B}{digest} \gets \CurrentCommit()$\;
$\BS{B}{time}\gets \textsc{Now}()$ \tcp*{Current slot and round indices}
$\BS{B}{node} \gets \text{pk}$ \;
$\BS{B}{sign} \gets \textsc{Sign}_{\text{sk}}(B)$ \tcp*{Sign the content of block}
$\D\gets \D\cup \{B\}$ \tcp*{Update the maintained DAG}
\Return $B$
}

\Func{\Tips{}}{
\Return $\{ B\in \D:\; \nexists C\in \D: \Hash(B)\in \BS{C}{refs} \}$
}
\Func{\CurrentCommit{}}{ \label{line:currencommit}
\Return $\chain_{\text{own}}.last$ \tcp*{Last element of backbone chain}
}

 \Func{\IsQuorum{$Set$}} { \label{line:quorum}
\Return $|\BS{B}{node}: \ B\in Set| \ge 2f+1$
}
\Proc{\UpdateChain{}}{ \label{proc:UpdateChain}
$\langle s+1,f+2 \rangle \gets \Now()$ \;
$\sigma_{s-1} \gets \CurrentCommit()$ \;
$r\gets \{\}$\;
$Blocks_{\le s} \gets \{B\in \D\setminus \D(\sigma_{s-1}) : \ \BS{B}{slot} \le s\}$\;
\For{$ B\in Blocks_{\le s}$}{
    $r\gets r || \Hash(B)$\;
}
$\sigma_{s} = \Hash(\sigma_{s-1}, r)$ \;
$\chain_{\text{own}} = \chain_{\text{own}} || \sigma_{s} $ \tcp*{Append the computed digest to the end}
$\Order_{\text{own}}\gets \Order(\sigma_s)$ \;
} 
\Proc{\FinalizeSlots{}}{ \label{proc:FinalizeSlots}
$\langle s,i \rangle \gets \Now()$ \; 
\For{$ t \in [s_{\textup{final}}+3, s]$}
{$\sigma_{t}\gets\chain_{\text{own}}[t]$\;
$Blocks_{=t}\gets \{B\in \D(\sigma_{t}) : \ \BS{B}{slot} = t\}$\;
DC$_{=t}\gets \{B\in Blocks_{=t}:\ B$  is a DC for $\sigma_{t-2} \}$\tcp*{ DCs from slot $t$ (see Def.~\ref{def: quorum certificate})}
\If{$\IsQuorum(\textup{DC}_{=t})$}{
    $s_{\textup{final}}\gets t-2$\;
    $\Order_{\text{final}} \gets \Order(\sigma_{t-2})$ \;
     $\tau\gets \FinalTime(s_{\text{pre}}+1)$ \tcp*{see Def.~\ref{def: final time}}
     \While{$\tau\neq \bot$}{
     $\sigma\gets \chain_{\text{own}}[\tau]$\;
    $s_{\text{pre}}\gets \LastFinal(\D(\sigma))$ \tcp*{Slot index of last final digest in $\D(\sigma)$}
    $\FinalizeTransactions()$ \label{proc: finalizeTransactions} \tcp*{see Alg.~\ref{alg: ledger}}
    $\tau\gets \FinalTime(s_{\text{pre}}+1)$\;
    }
}
}
}
\end{algorithm}

\begin{algorithm}[!htbp]
\scriptsize
\caption{Procedure to switch a chain.}
\label{alg: chain}
\BlankLine

\Loc{}{
$\D$ \tcp*{DAG maintained by the node}
$\chain_{\text{own}}$ \tcp*{backbone chain adapted by the node}
$B_{\text{own}}$ \tcp*{Last block created by the node}
$I_{\text{ELSS}}\gets 0$ \tcp*{Indicator that detects an ELSS model}
$s_{\text{final}}$ \tcp*{Slot index of the last final slot digest}
$\EqSet$ \tcp*{Set of equivocators known to the node}
}
\tcp{Procedure to adopt a chain when the node was slot-$s$ awake}
\Proc{\SwitchChain{} \label{proc:SwitchChain}}{
    $\langle s+1, 1 \rangle \gets \Now()$\;
    $\sigma_{s-1}\gets \CurrentCommit()$\;
    $N_{\text{same}} \gets 0, N_{\text{total}} \gets 0$ \;
    \For{$ node \in \{1,2,\ldots, n\}\setminus \EqSet$} {
     $B\gets \LastBlockFrom(node)$ \;
     \If{$\BS{B}{time} = \langle s, f+2\rangle$}{
     $N_{\text{total}}\gets N_{\text{total}}+1$\;
      \If{$\BS{B}{digest} = \sigma_{s-1}$}{
        $N_{\text{same}}\gets N_{\text{same}}+1$\;
        }
     }
    }   
    $\CheckELSS()$ \label{alg: checkElss} \tcp*{see Alg.~\ref{alg: identifyELSS}}
    $\text{leader} \gets \Leader()$ \;
    \If{$s_{\textup{final}}\neq s-2$ $\land$ $\textup{leader} \neq \bot$ \label{alg: finalized before} \label{alg: has not finalized and leader}}{ 
    $B_{\text{leader}}\gets \LastBlockFrom(\text{leader})$ \;
    \If{$\neg$\IsValid{$\cone(B_{\textup{leader}})$}}{\textbf{break}}
    DC$_{\text{leader}}\gets \LastCommitCertificate(B_{\text{leader}})$ \tcp*{Last DC created by the block creator in block's causal history}
    DC$_{\text{own}}\gets \LastCommitCertificate(B_{\text{own}})$ \;
    \If{$\IsConflict(\Commit(\textup{DC}_{\textup{leader}}), B_{\textup{own}}.\mathrm{digest})$}{
    $I_{\text{ELSS}}\gets 1$ \label{alg: change ELSS 1}\;
    }
    \eIf{$2N_{\textup{same}}\le N_{\textup{total}}$ }{
        \If{$\neg \IsConflict( \BS{B_{\textup{leader}}}{digest},\Commit(\textup{DC}_{\textup{own}}))$ $\lor$  \textup{DC}$_{\textup{leader}}.\mathrm{slot} \ge \textup{DC}_{\textup{own}}.\mathrm{slot}$ \label{alg: sneak to non-conflic}}{
    $\D \gets \D \cup \cone(B_{\text{leader}})$\;
        $\chain_{\text{own}} \gets \chain( \BS{B_{\text{leader}}}{digest})$ \;
        $\Order_{\text{own}} \gets \Order(\BS{B_{\text{leader}}}{digest})$ \;
    }
    }{ 
    \If{$I_{\textup{ELSS}}=1$ $\land$ $\textup{DC}_{\textup{leader}}.\mathrm{slot} \ge \textup{DC}_{\textup{own}}.\mathrm{slot}$ \label{alg: adopt CC: SS safety}}{
        $\D \gets \D \cup \cone(B_{\text{leader}})$\;
        $\chain_{\text{own}} \gets \chain(\BS{B_{\text{leader}}}{digest})$ \;
        $\Order_{\text{own}} \gets \Order( \BS{B_{\text{leader}}}{digest})$ \;
        } 
    }
}

}
\end{algorithm}

\begin{algorithm}[!htbp]
\scriptsize
\caption{Procedures to ``wake up'' and indicate the communication model.}
\label{alg: identifyELSS}
\BlankLine
\Loc{}{
$\D$ \tcp*{DAG maintained by the node}
$\chain_{\text{own}}$ \tcp*{backbone chain adapted by the node}
$I_{\text{ELSS}}\gets 0$ \tcp*{Indicator that detects an ELSS model}
$\EqSet$ \tcp*{Set of equivocators known to the node}
}
\tcp{Procedure to adopt a chain when the node was slot-$s$ asleep} \label{proc:WakeUpChain}
\Proc{\WakeUpChain{}}{
    $\langle s+1, 1 \rangle \gets \Now()$\;
    $M \gets \{\}$ \tcp*{Multiset of digests}
    \For{$ node \in \{1,2,\ldots, n\}\setminus \EqSet$} {
     $B\gets \LastBlockFrom(node)$ \;
     \If{$\BS{B}{time} = \langle s, f+2\rangle$}{    
      $M\gets M \cup \{\BS{B}{digest}\}$ \;
     }     
    } 
    $\sigma_{s-1}\gets \Mode(M)$ \tcp*{Most present element in the multiset with deterministic tiebreaker} 
    $\chain_{\text{own}} \gets \chain(\sigma_{s-1})$ \;
    $\Order_{\text{own}} \gets \Order(\sigma_{s-1})$ \;
    \For{$node \in \{1,2,\ldots, n\} \setminus \EqSet$} {
     $B\gets \LastBlockFrom(node)$ \;
     \If{\IsValid{$\cone(B)$} $\land$ $\BS{B}{digest} = \sigma_{s-1}$}{    
      $\D \gets \D \cup \cone(B)$ \;
     }    
     }
}

\tcp{Procedure to update the variable indicating the ELSS model} \label{proc:CheckELSS}
\Proc{\CheckELSS{}}{
$\langle s+1, 1 \rangle \gets \Now()$\;
    $M \gets \{\}$ \tcp*{Multiset of digests}
    \For{$ B \in \buffer$ \textup{s.t.} $\BS{B}{time} = \langle s-1,f+2\rangle $} {
     $M\gets M \cup \{\BS{B}{digest}\}$ \;         
    }
    \If{$\exists \sigma, \theta \in M$ \textup{s.t.} $\Number(\sigma, M) \ge f+1$ $\land$ $\Number(\theta, M) \ge f+1$}{
    $I_{\text{ELSS}} \gets 1$ \label{alg: change ELSS 2}\;
    }
    
    }
\tcp{The function outputs the number of repetitions of an element in a multiset}
\Func{$\Number(\sigma, M)$} {
$count\gets 0$ \;
\For{$ \beta\in M$}{
\If{$\beta = \sigma$}{
$count\gets count +1$\;
}}
\Return $count$}
    
\end{algorithm}

\begin{algorithm}[!htbp]
\caption{Procedures to update the ledger.}
\label{alg: ledger}
\scriptsize
\BlankLine
\Loc{}{
$\D$ \tcp*{DAG maintained by the node}
$\LOG$ \tcp*{Confirmed ledger maintained by the node}
$s_{\text{final}}\gets 0$\tcp*{Slot index of the last final slot digest}
$s_{\text{pre}}\gets 0$ \tcp*{Slot index of the last final slot digest in $\D(s_{\text{final}})$, see Def.~\ref{def: final time}}
$\ProcessedWithTxCert \gets \{\}$ \tcp*{Processed blocks with transactions for which transactions certificates were checked}
$\ProcessedTotalOrder \gets \{\}$ \tcp*{Processed blocks with transactions which were resolved using the total order}
}
\tcp{Procedure to confirm transaction through the consensus path}
\Proc{$\FinalizeTransactions()$}{
$\tau\gets \FinalTime(s_{\text{pre}})$ \tcp*{see Def.~\ref{def: final time}}
\tcp{First add transactions with transaction certificates (see Def.~\ref{def: transaction certificate}). Use $\Order_{\text{final}}$ for the for-loop.}
\For{$ B\in \D(\sigma_{\tau})\setminus \ProcessedWithTxCert$ \textup{s.t.} $\BS{B}{slot}\le \tau-2$ \label{alg: processed with transaction certificate}}{
$\ProcessedWithTxCert\gets \ProcessedWithTxCert \cup \{B\}$ \;
 \For{$ tx \in \BS{B}{txs}$} {
    \If{$\exists C\in \D(\sigma_{\tau})$ \textup{s.t.} $C$ \textup{is a} \textup{TC for} $tx$ \textup{in} $B$} {
     \If{$tx$ \textup{has inputs in} $\LOG$} {
     
     \If{$tx$ \textup{is not conflicting with} $\LOG$}{
     $\LOG \gets \LOG \cup \{tx\}$\;  
     }
    }
    
    }
    }    
}
\tcp{Use total order to resolve remaining conflicts}
\For{$ B\in \D(\sigma_{\tau-2})\setminus \ProcessedTotalOrder$ \label{alg: processed with total order}}{
$\ProcessedTotalOrder\gets \ProcessedTotalOrder \cup \{B\}$ \;
 \For{$ tx \in \BS{B}{txs}$} {
    \If{$tx$ \textup{has inputs in} $\LOG$\label{alg: inputs there}} {
     
     \If{$tx$ \textup{is not conflicting with} $\LOG$}{
     $\LOG \gets \LOG \cup \{tx\}$\;  
     }
    }
    
    }    
}
}

\tcp{Procedure to confirm transaction through the fast-path}
\Proc{$\ConfirmTransactions()$ \label{proc: confirmTransactions}}{
\For{$ B \in \D$}{
\For{$ tx\in \BS{B}{txs}$}{
TC$_{\textup{all}}\gets \AllTxCertificates(tx, B)$ \; 
\If{$\IsQuorum(\textup{TC}_{\textup{all}})$}{$\LOG\gets \LOG \cup \{tx\}$\;
}
}
}
}
\Func{$\AllTxCertificates(tx,B)$}{
$M\gets\{\}$ \tcp*{Multiset of transaction certificates}
\For{$ C \in \D$ s.t. $0\le \BS{C}{slot}-\BS{B}{slot}\le 1$}{
\If{$C$ \textup{is a TC for} $tx$ \textup{in} $B$}{
$M\gets M\cup \{C\}$\;
} 
}
\Return $M$
}
\end{algorithm}
\section{Comparison with DAG-based BFT Protocols}\label{sec: comparison with dag-based bft protocls}
\textbf{DAG-based BFT protocols:}
There is a class of DAG-based consensus protocols with threshold clocks designed for partially synchronous and asynchronous networks.  In a round-based protocol, a node can increase the round number only when the DAG contains a quorum of blocks with the current round number.  These protocols optimize their performance by assigning leaders for certain rounds and using special commit rules for leader blocks. Committed leader blocks form a backbone sequence that allows for the partitioning of the DAG into slices and the deterministic sequencing of blocks in the slices. 

Many protocols in this class such as Aleph~\cite{gkagol2018aleph}, DAG-Rider~\cite{keidar2021all},  Tusk~\cite{danezis2022narwhal}, and Bullshark~\cite{spiegelman2022bullshark} use special broadcast primitives such as Byzantine Reliable Broadcast (BRB) and Byzantine Consistent Broadcast (CRB) to disseminate \textit{all} blocks. Therefore, during the execution of any of these protocols, nodes construct a \textit{certified DAG}, where each block comes with its certificate, a quorum of signatures, which does not allow any equivocating blocks to appear in the DAG. The latency for non-leader blocks in the above protocols suffers from introducing \textit{waves}, a number of consecutive rounds with one designated leader block. Shoal~\cite{spiegelman2023shoal} reduces the latency of non-leader blocks by interleaving two instances of Bullshark.  


All the above protocols use broadcast primitives for every block, leading to an increased latency compared to the state-of-the-art chain-based consensus protocols like HotStuff~\cite{malkhi2023hotstuff,yin2019hotstuff}.  To reduce this latency and simplify the consensus logic, BBCA-Chain~\cite{malkhi2023bbca} suggested using a new broadcast primitive, called Byzantine Broadcast with Complete-Adopt (BBCA), only for leader blocks, and Best-Effort Broadcast (BEB) for all other blocks. In addition, BBCA-Chain makes all rounds symmetric by assigning leaders every round. Note that there has been a similar effort; specifically, Sailfish~\cite{shrestha2024sailfish} and Shoal++~\cite{arun2024shoal++} assign a leader node for every round and allow committing even before BRB instances deliver voting blocks. This allows improving the latency in the \textit{happy-case} when all nodes are correct or in case of crashed nodes.

 Hashgraph~\cite{baird2016swirlds} is the first DAG-based leaderless BFT protocol in which nodes use BEB for disseminating all their blocks. Nodes construct an \textit{uncertified} or \textit{optimistic} DAG, in which equivocating blocks could appear. By logical interpretation of the DAG, nodes exclude equivocations and run an inefficient binary agreement protocol that orders blocks by the received median timestamps. This leads to very high latency in the worst case. Cordial Miners~\cite{keidar2022cordial} use a similar mechanism to exclude equivocations. However, the latency is significantly improved by assigning leaders in each 3-round wave and introducing a special commit rule for leader blocks on such an optimistic DAG. Mysticeti~\cite{babel2023mysticeti} is the most recent improvement of Cordial Miners that introduces pipelined leader blocks. Mysticeti allows every node at every round to be a leader, which significantly reduces the latency, especially in case of crash nodes.

 Our protocol, Slipstream, uses BEB to disseminate all blocks in the DAG and uses time slots to partition the DAG into slices. On the one hand,  the happy-case latency of committing blocks in slices is linear with the number of Byzantine nodes, which is much higher than the constant average latency in many other DAG-based solutions. On the other hand, all existing DAG-based BFT protocols halt in case of network partitions (e.g., when $1/3$ of nodes are offline). In contrast, our protocol allows for optimistic committing the DAG in such cases and tolerating a higher fraction of Byzantine nodes under a synchronous slot-sleepy network model.

\section{Adapting Slipstream's Consensus Path for UTXO transactions to Mysticeti-FPC}\label{sec: adapting Slipstream to Mysticeti-FPC}
\subsection{Basics on Mysticeti-C and Mysticeti-FPC}
First, we recall how the Mysticeti-C protocol works and refer to~\cite{babel2023mysticeti} for more details. The causal history of block $L$ is denoted as $\cone(L)$. Each block from round $r$ has to reference at least $2f+1$ blocks from round $r-1$. A block $X$ votes for a block $Y\in \cone(X)$ if  $\BS{Y}{round} = \BS{X}{round}-1=r$ and $Y$ is the first block from that block creator $\BS{Y}{node}$ in that round $r$ in the depth-first traversal of $X$. A block $X$ is called a block certificate for $Y$ if $\BS{Y}{round}=\BS{X}{round}-2$ and $\cone(X)$ contains a quorum of blocks voting for $Y$. 
All nodes agree on the scheduler of \textit{leader slots}, where a leader slot is a pair of a designed node and a round number. A block $L$ from a leader slot is  a leader block. A leader block is called committed in two cases:
\begin{enumerate}
    \item \textbf{Directly committed:} a local DAG contains a quorum
of block certificates for $L$;
    \item \textbf{Indirectly committed:} 2.1) a local DAG contains a directly committed leader block $B$ from round at least $\BS{L}{round}+3$ that has a path (on the local DAG) to a block certificate for $L$, and 2.2) all earlier (than $B$) committed leader blocks have no path (on the local DAG) to a block certificate for $L$.
\end{enumerate}

Next, we recall the transaction confirmation rule in Mysticeti-FPC. For the sake of simplicity, we consider only owned-object transactions. Nodes include blocks with transactions. In addition, each block contains explicit votes for causally past transactions. Each correct node is assumed to vote for each transaction once. A block is called a transaction certificate for a given transaction if it can reach blocks from $2f + 1$ nodes, including a vote for the given transaction. A given transaction is called confirmed if one of the two conditions holds:
\begin{enumerate}
    \item \textbf{Fast path:} the local DAG contains $2f+1$ transaction certificates for the given transaction;
    \item \textbf{Consensus path:} the causal history of a committed leader block contains (at least) one transaction certificate for the given transaction; 
\end{enumerate}

It is shown in~\cite{babel2023mysticeti} that every transaction confirmed through the fast path is also confirmed through the consensus path.

\subsection{Adapting the consensus path in Mysticeti-FPC}
We will describe one possible variation for the consensus path that enables unconfirmed double spends to be resolved. We propose to make votes in blocks only for transactions included in blocks from recent rounds and add an additional option for confirmation in the consensus path. Specifically, after confirming transactions with a transaction certificate, one can take an \textit{old enough} part of the DAG and confirm unconfirmed transactions using the total order, even if in the absence of transaction certificates. 
However, we must ensure that any transaction confirmed through this method remains consistent with the fast-path confirmation, such that no conflicts arise between transactions confirmed via the consensus path and those confirmed via the fast path. 

We redefine the concept of a transaction certificate to align with the time constraints used for block certificates, as follows (changes are highlighted in \textcolor{blue}{blue}).

    \begin{definition}
         A block $X$ is called a transaction certificate for a transaction $tx$ in block $Y$ if \textcolor{blue}{$\BS{Y}{round}=\BS{X}{round}-2$ and} $\cone(B)$ contains blocks from $2f + 1$ nodes, including a vote for the transaction $tx$ in $B$.
    \end{definition}
    \begin{remark}
    We note that a block certificate in Mysticeti-C has a similar condition on the round of voting blocks, whereas a transaction certificate in Mysticeti-FPC can appear anywhere in the causal future of a transaction.
    \end{remark}
Apart from assuming that each correct node votes for each transaction once, we also assume that once a correct node detects double spending transaction, it does not include a vote for any of them.

We will partition the DAG into slices in a deterministic way using directly committed leader blocks. 
In the protocol, nodes keep track of a backbone chain of committed leader blocks (both directly and indirectly), say $L_1, L_2, L_3, \ldots$. For each leader block $L_i$ in this backbone chain, define $C(L_i)$ as $L_j$ with minimum $j>i$ such that $L_i$ is committed in the DAG restricted to $\cone(L_j)$.  We denote by $D(L_i)$  the leader block $L_j$ with the minimum $j\ge i$ such that $L_j$ is directly committed in the DAG $\cone(C(L_i))$. 
Note that $\BS{C(L)}{round}\ge \BS{D(L)}{round}+3$. It will be important that $D(L)$ is directly committed, e.g., $D(L)$ is referenced by at least $2f+1$ blocks in the next round. A deterministic definition of $D(L)$ using $C(L)$ is needed to ensure that all correct nodes will make the same operations. For a sequence of committed leader blocks $L_1,L_2,\ldots$, one can define a sequence of distinct $C_1, C_2, \ldots $ and a sequence of distinct $D_1, D_2,\ldots $.

Now there are three ways for a transaction in a given block to become confirmed:
\begin{enumerate}
    \item \textbf{Fast path:} the local DAG contains $2f+1$ transaction certificates over the
block proposing the transaction;
    \item \textbf{Consensus path:}
    \begin{itemize}
        \item[2a] When a new leader block $C=C(L)$ is determined based on the committed backbone chain. We consider all blocks created at or before round $\BS{C}{round}-3$  and located in $\cone(C)$.  We check whether transactions in those blocks have at least one transaction certificate in $\cone(C)$. For every transaction with at least one transaction certificate, if possible (i.e., all input objects are confirmed and no double spend for this transaction was confirmed), we confirm the transaction; 
        \item[2b] When a new leader block $D=D(L)$ is determined based on the backbone chain of committed leader blocks. We traverse all transactions contained in blocks in $\cone(D)$. For every such transaction, if possible (all input objects are confirmed and no double spend for this transaction was yet confirmed), we confirm the transaction. At this step, we apply a total order over the transactions in the DAG $\cone(D)$. 
    \end{itemize}
\end{enumerate}
Consensus path 2b allows us to resolve unconfirmed double spends and confirm transactions that didn't get a certificate in time. Now we prove a statement showing the consistency between the fast path and the consensus path.
\begin{lemma}
A transaction confirmed through the fast path by a correct node will be confirmed through the consensus-path 2a by all correct nodes.
\end{lemma}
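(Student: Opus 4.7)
The plan is to reduce the lemma to a quorum-intersection argument: fast-path confirmation forces $2f+1$ transaction certificates at a specific round, and any committed leader block $C$ whose round is large enough must, by block validity, reach at least one of those certificates through a correct issuer.

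First, I will unpack the hypothesis. Fast-path confirmation of $tx$ in $Y$ by a correct node means its local DAG contains $2f+1$ transaction certificates $X_1,\dots,X_{2f+1}$ at round $\BS{Y}{round}+2$, issued by $2f+1$ distinct nodes; at least $f+1$ issuers are correct. Call this correct set $S_{\mathrm{cert}}$. Now pick any committed leader block $C$ in the backbone chain with $\BS{C}{round}\ge \BS{Y}{round}+3$; this is exactly the regime in which 2a processes $Y$, since $\BS{Y}{round}\le \BS{C}{round}-3$. Validity of $C$ forces it to reference $2f+1$ blocks at round $\BS{C}{round}-1$ from $2f+1$ distinct issuers, of whom at least $f+1$ are correct; call this set $S_C$. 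Since $|S_{\mathrm{cert}}|,|S_C|\ge f+1$ and both live inside the $2f+1$ correct nodes, pigeonhole yields $\eta\in S_{\mathrm{cert}}\cap S_C$. Because $\eta$ is correct, its one-block-per-round discipline chains $X_\eta$ into the causal history of $\eta$'s block at round $\BS{C}{round}-1$, which lies in $\cone(C)$. Hence $X_\eta\in\cone(C)$, and the structural trigger of 2a for $tx$ in $Y$ is met.

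It remains to discharge the ``if possible'' clause of 2a: inputs of $tx$ must already lie in $\LOG$, and no conflicting transaction may have been confirmed earlier. The input condition is routine by induction on the causal depth of $tx$, since cautiousness lets us treat inputs as fast-path confirmed transactions handled earlier by the same argument. For the conflict condition, any double-spend $tx'$ of $tx$ cannot itself be fast-path or 2a-confirmed: the at least $f+1$ correct voters for $tx$ abstain from voting for $tx'$ (correct nodes abstain on detected double-spends), so $tx'$ gathers at most $2f$ votes and never acquires a transaction certificate.

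The main obstacle I expect is ruling out that $tx'$ is 2b-confirmed at some earlier directly committed leader $D_k$ before 2a reaches the $C$ above. The intended schedule (2a with $C_i$ before 2b with $D_i$, leaders processed in chain order) should force the following dichotomy: either $Y\in\cone(C_k)$ for some $k\le i$, in which case the same quorum-intersection argument places a transaction certificate in $\cone(C_k)$ and confirms $tx$ at 2a with $C_k$ strictly before $Y'$ is touched; or $Y\notin\cone(C_k)$, in which case a reachability argument via the $2f+1$ blocks at round $\BS{Y}{round}+1$ carrying $Y$ in their past should force $Y'\notin\cone(D_k)$ as well, pre-empting 2b on $Y'$. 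Making this dichotomy airtight is the step that requires the adapted protocol's processing order of 2a and 2b to be pinned down precisely; the purely structural core of the lemma, however, is the three-step quorum-intersection argument sketched above.
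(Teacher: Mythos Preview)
Your quorum-intersection argument placing a transaction certificate for $tx$ into $\cone(C)$ for any committed leader $C$ with $\BS{C}{round}\ge \BS{Y}{round}+3$ is correct and matches the paper's opening step. Your argument that $tx'$ cannot acquire a certificate (hence cannot be confirmed via the fast path or 2a) is also fine.

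The genuine gap is in ruling out 2b, and your dichotomy does not close it. In your first branch, $Y\in\cone(C_k)$ only yields $\BS{Y}{round}<\BS{C_k}{round}$, not the $\BS{C_k}{round}\ge \BS{Y}{round}+3$ you need to run the certificate-in-cone argument at $C_k$; and even if $Y$ is processed there, 2a checks for a certificate \emph{inside} $\cone(C_k)$, which your branch has not established. Your second branch (``$Y\notin\cone(C_k)$ forces $Y'\notin\cone(D_k)$'') is asserted without proof, and there is no obvious reachability argument that delivers it. You also misdiagnose the obstacle as a processing-order issue. The paper's actual mechanism is different and uses the abstention rule in a way you did not: intersect the $2f{+}1$ nodes behind the fast-path confirmation of $tx$ with the $2f{+}1$ nodes behind the direct commitment of $D'$ to obtain a common correct node $\nu$. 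If $\nu$'s contribution to $D'$ preceded (or coincided with) its approval of $tx$, then $\nu$ would already see $tx'\in\cone(D')$ and, by abstention, could not approve $tx$---contradiction. Hence $\nu$ approved $tx$ strictly earlier, which forces $\BS{B}{round}<\BS{D'}{round}$. Since $\BS{C(L')}{round}\ge \BS{D'}{round}+3$, the certificates for $tx$ sit at round $\BS{B}{round}+2\le \BS{C(L')}{round}-2$, so your own pigeonhole step now applies at $C(L')$ and confirms $tx$ via 2a before 2b at $D'$ can touch $tx'$. This round-ordering step, extracted from the abstention rule, is the missing idea in your proposal.
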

\begin{proof}
Assume $tx$ in block $B$ is fast-path confirmed. From quorum intersection, it follows that at least one transaction certificate will be in the causal history of a leader $C$ with $\BS{C}{round}>\BS{B}{round}+2$. It remains to show that no double spend for $tx$ could be confirmed before block $C$ is determined. Toward a contradiction, assume a double spend $tx'$ was confirmed before. Transaction $tx'$ can't be confirmed using options 1 and 2a due to the quorum intersection for transaction certificates. Still, we need to check that $tx'$ could not be confirmed when committing an earlier leader, say $D'=D(L')$, through consensus path 2b. Since $D'$ was directly committed and $tx$ was fast-path confirmed, there is at least one correct node contributing with its blocks for voting of $D'$ and $tx$. Approval of $tx$ could happen only strictly before the approval of $D'$ since otherwise, that correct node could not approve $tx$ while knowing about $tx'$ in $\cone(D')$. Thus, $\BS{B}{round} < \BS{D'}{round}$. However, $tx$ is fast-path confirmed with a quorum of transaction certificates being created at $\BS{B}{round}+2\le  \BS{D'}{round}+1$. Recall that $\BS{C(L')}{round} \ge \BS{D(L')}{round}+3$. This implies that at least one of the transaction certificates for $tx$ in $B$ was in $\cone(C(L'))$ and $tx$ should be consensus path confirmed through option 2a before transaction $tx'$ through 2b. This leads to a contradiction.
\end{proof}
Next, we informally analyze the latency for each of the paths for transaction confirmation. We assume a \textit{happy case} when, at one round, each correct node references blocks of all correct nodes from the previous round.  The fast-path confirmation of transactions optimistically could happen in three rounds (or two rounds after nodes receive a block with an owned-object transaction) similar to Mysticeti-FPC. The confirmation of transactions through the consensus path 2a happens once a transaction certificate appears on the DAG (3 rounds) and this certificate is included in the causal history of a committed leader block ($\ge 2$ rounds), i.e., it could happen in five rounds when this certificate is a leader block. The confirmation through the consensus path 2b for double spends that didn't get certificates happens when the block containing one of the double spends appears in $\cone(D(L))$, where $D(L)$ is a directly committed leader block derived using $C(L)$ and the sequence of all committed leader blocks; when such a transaction is in block $D$, then one needs at least four rounds to get in $\cone(C(L))$ and at least two more rounds to commit $C(L)$, i.e. it could happen in six rounds.

\end{document}